\newtheorem{corollary}{Corollary}
\newtheorem{lemma}{Lemma}
\newtheorem{remark}{Remark}
\newtheorem{definition}{Definition}
\newtheorem{proposition}{Proposition}
\newtheorem{example}{Example}
\newtheorem{assumption}{Assumption}
\DeclareMathOperator*{\argmin}{argmin}
\newcommand{\parinaz}[1]{\textcolor{purple}{PN: #1}}
\newcommand{\Rb}{\mathbb{R}}
\newcommand{\EE}{\mathcal{E}}
\newcommand{\DD}{\mathcal{D}}
\title{\LARGE \bf Behavioral and Game-Theoretic Security Investments in Interdependent Systems Modeled by Attack Graphs} 
\author{Mustafa Abdallah,~Parinaz Naghizadeh,~Ashish R. Hota,~Timothy Cason,\\~Saurabh Bagchi, and~Shreyas Sundaram
\thanks{This research was supported by grant CNS-1718637 from the National Science Foundation. Mustafa Abdallah, Saurabh Bagchi, and Shreyas Sundaram are with the School of Electrical and Computer Engineering at Purdue University, West Lafayette, Indiana, USA, 47907. Email: {\tt \{abdalla0,sbagchi,sundara2\}@purdue.edu}. Parinaz Naghizadeh is with the Integrated Systems Engineering Department and the Electrical and Computer Engineering Department, Ohio State University, USA. Email: {\tt{naghizadeh.1@osu.edu}}. Ashish R. Hota is with the Department of Electrical Engineering, Indian Institute of Technology (IIT), Kharagpur, India. Email: {\tt{ahota@ee.iitkgp.ac.in}}. Timothy Cason is with the Krannert School of Management at Purdue University. Email: {\tt{cason@purdue.edu}}. Corresponding author: Shreyas Sundaram. A preliminary version of this paper appears in the Proceedings of the American Control Conference 2019.}.%
}
\begin{document}
\maketitle

\begin{abstract}
We consider a system consisting of multiple interdependent assets, and a set of defenders, each responsible for securing a subset of the assets against an attacker. The interdependencies between the assets are captured by an attack graph, where an edge from one asset to another indicates that if the former asset is compromised, an attack can be launched on the latter asset.  Each edge has an associated probability of successful attack, which can be reduced via security investments by the defenders.  In such scenarios, we investigate the security investments that arise under certain features of human decision-making that have been identified in behavioral economics.  In particular, humans have been shown to perceive probabilities in a nonlinear manner, typically overweighting low probabilities and underweighting high probabilities. We show that suboptimal investments can arise under such weighting in certain network topologies. We also show that pure strategy Nash equilibria exist in settings with multiple (behavioral) defenders, and study the inefficiency of the equilibrium investments by behavioral defenders compared to a centralized socially optimal solution.
\end{abstract}

\section{Introduction}

Modern cyber-physical systems (CPS) are increasingly facing attacks by sophisticated adversaries. These attackers are able to identify the susceptibility of different targets in the system and strategically allocate their efforts to compromise the security of the network. In response to such intelligent adversaries, the operators (or defenders) of these systems also need to allocate their often limited security budget across many assets to best mitigate their vulnerabilities. This has led to significant research in understanding how to better secure these systems, with game-theoretical models receiving increasing attention  due to their ability to systematically capture the interactions of strategic attackers and defenders \cite{humayed2017cyber,laszka2015survey,alpcan2010network,sanjab2016data, miao2018hybrid, milosevic2019network, brown2018security, riehl2017centrality}. 


In the context of large-scale interdependent systems, adversaries often use stepping-stone attacks to exploit vulnerabilities within the network in order to compromise a particular target \cite{zonouz2012scpse}.  Such threats can be captured via the notion of {\it attack graphs} that  represent all possible paths that attackers may have to reach their targets within the CPS  \cite{homer2013aggregating}. The defenders in such systems are each responsible for defending some subset of the assets  \cite{laszka2015survey,naghizadeh2016opting} with their limited resources. These settings have been explored under various assumptions on the defenders and attackers \cite{la2016interdependent,naghizadeh2016opting,hota2018game}. 

In much of the existing literature, the defenders and attackers are modeled as fully rational decision-makers who choose their actions to maximize their expected utilities. However, a large body of work in behavioral economics has shown that humans consistently deviate from such classical models of decision-making \cite{kahneman1979prospect, dhami2016foundations,barberis2013thirty}.  A seminal model capturing such deviations is {\it prospect theory} (introduced by  Kahneman and Tversky in \cite{kahneman1979prospect}), which shows that humans perceive gains, losses, and probabilities in a skewed (nonlinear) manner, typically overweighting low probabilities and underweighting high probabilities.  Recent papers have studied the implications of prospect theoretic
preferences in the context of CPS security and robustness \cite{7544460,hota2016fragility,9030279}, energy consumption decisions in the smart grid \cite{etesami2018stochastic}, pricing in communication networks \cite{yang2015prospect}, and network interdiction games \cite{sanjab2017prospect}.


In this paper, we consider the scenario where each (human) defender misperceives the probabilities of successful attack in the attack graph.\footnote{
While existing literature on behavioral aspects of information security, such as \cite{baddeley2011information,anderson2012security,schneier2008psychology} rely on human subject experiments and more abstract decision-making models, we consider the more concrete framework of attack graphs in our analysis. This framework allows for a mapping from existing vulnerabilities to potential attack scenarios. Specifically, one model that is captured by our formulation is to define vulnerabilities by CVE-IDs \cite{martin2001managing}, and assign attack probabilities using the Common Vulnerability Scoring System (CVSS) \cite{mell2006common}.} We characterize the impacts of such misperceptions on the security investments made by each defender. In contrast with prior work on prospect theoretic preferences in the context of CPS security, \cite{7544460} 
which assumed that each defender is only responsible for the security of a single node, we consider a more general case where each defender is responsible for a subnetwork (i.e., set of assets). Furthermore, each defender can also invest in protecting the assets of other defenders, which may be beneficial in interdependent CPS where the attacker exploits paths through the network to reach certain target nodes.


Specifically, we build upon the recent work \cite{hota2018game} where the authors studied a game-theoretic formulation involving attack graph models of interdependent systems and multiple defenders. The authors showed how to compute the optimal defense strategies for each defender using a convex optimization problem. However, they did not investigate the characteristics of optimal investments and the impacts of behavioral biases of the defenders which are the focus of the present work.

We introduce the attack-graph based security game framework in Section \ref{sec: framework}, followed by the behavioral security game setting in Section \ref{sec:behavioralclasses}. Under appropriate assumptions on the probabilities of successful attack on each edge, we establish the convexity of the perceived expected cost of each defender and prove the existence of a pure Nash equilibrium (PNE) in this class of games. 

We primarily investigate the security investments when users with such behavioral biases act in isolation (Section \ref{sec:single_optimal_inves_dec}) as well as in a game-theoretic setting (Section \ref{sec: PNE}). As a result, we find certain characteristics of the security investments under behavioral decision making that could not have been predicted under classical notions of decision-making (i.e., expected cost minimization) considered in prior work \cite{hota2018game}. In particular, we show that nonlinear probability weighting can cause defenders to invest in a manner that increases the vulnerability of their assets to attack. Furthermore, we illustrate the impacts of having a mix of defenders (with heterogeneous levels of probability weighting bias) in the system, and show that the presence of defenders with skewed perceptions of probability can in fact \emph{benefit} the non-behavioral defenders in the system.



We then propose a new metric, \emph{Price of Behavioral Anarchy  (PoBA)}, to capture the inefficiency of the equilibrium investments made by behavioral decision-makers compared to a centralized (non-behavioral)  socially optimal solution, and provide tight bounds for the PoBA. We illustrate the applicability of the proposed framework in a case study involving a distributed energy resource failure scenario, DER.1, identified by the US National Electric Sector Cybersecurity Organization Resource (NESCOR) \cite{jauhar2015model} in Section~\ref{sec:example}.

This paper extends the conference version of this work \cite{8814307} in the following manner:

$\bullet$ We rigorously prove the uniqueness of optimal investment decisions for behavioral defenders, and show that Behavioral Security Games can have multiple PNEs in general.

$\bullet$ We quantify the inefficiency of the Nash equilibria by defining the notion of PoBA, and provide (tight) bounds on it. 

$\bullet$ We illustrate the theoretical findings via a case study.
\section{The Security Game Framework}\label{sec: framework}

In this section, we describe our general security game framework, including the attack graph and the characteristics of the attacker and the defenders. An overview of our model is shown in Figure~\ref{fig:overview_model}.

\subsection{Attack Graph}
We represent the assets in a CPS as nodes of a directed graph $ G=(V,\mathcal{E}) $ where each node $v_{i} \in V $ represents an asset. A directed edge $(v_{i},v_{j}) \in \mathcal{E}$ means that if $v_{i}$ is successfully attacked, it can be used to launch an attack on $v_{j}$. 

The graph contains a designated source node $ v_{s} $ (as shown in Figure~\ref{fig:overview_model}), which is used by an attacker to begin her attack on the network. Note that $v_s$ is not a part of the network under defense; rather it is an entry point that is used by an attacker to begin her attack on the network.\footnote{If there are multiple nodes where the attacker can begin her attack, then we can add a virtual node $v_s$, and add edges from this virtual node to these other nodes with attack success probability $1$ without affecting our formulation.}

For a general asset $ v_{t} \in V$, we define $\mathcal{P}_{t}$ to be the set of directed paths from the source $ v_{s} $ to $v_{t}$ on the graph, where a path $ P \in \mathcal{P}_{t}$ is a collection of edges $ \lbrace{(v_{s}, v_{1}), (v_{1}, v_{2}), . . . , (v_{k}, v_{t})\rbrace}$. For instance, in Figure~\ref{fig:overview_model}, there are two attack paths from $v_s$ to $v_t$.


Each edge $(v_i, v_j) \in \mathcal{E}$ has an associated weight $p_{i,j}^0 \in (0,1]$, which denotes the probability of successful attack on asset $v_j$ starting from $v_i$ in the absence of any security investments.\footnote{
In practice, CVSS \cite{mell2006common} can be used for estimating initial probabilities of attack (for each edge in our setting). For example, \cite{homer2013aggregating} takes the Access Complexity (AC) sub-metric in CVSS (which takes values in \{low, medium, high\}, representing the complexity of exploiting the vulnerability) and maps it to a probability of exploit (attack) success. The more complex it is to exploit a vulnerability, the less likely an attacker will succeed. Similarly,  \cite{8005480} provides methods and tables to estimate the probability of successful attack from CVSS metrics.} 

 We now describe the defender and adversary models in the following two subsections.




\begin{figure}
\begin{center}
  \includegraphics[width=0.5\columnwidth]{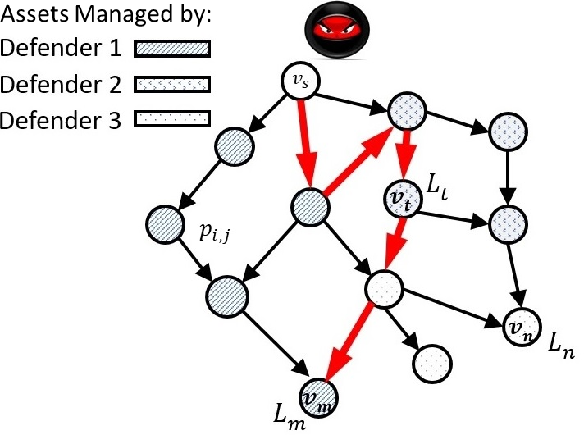}
  \caption{Overview of the interdependent security game framework. This CPS consists of three interdependent defenders. An attacker tries to compromise critical assets starting from $v_s$. 
  }
  \label{fig:overview_model}
\end{center}
\end{figure}

\subsection{Strategic Defenders}
Let $ \mathcal{D} $ be the set of all defenders of the network. Each defender $ D_{k} \in \mathcal{D} $ is responsible for defending a set $ V_{k} \subseteq V \setminus \lbrace{v_s\rbrace} $ of assets. For each compromised asset $ v_{m} \in V_{k} $, defender $ D_{k} $ will incur a financial {\it loss} $L_{m} \in [0,\infty) $. For instance, in the example shown in Figure \ref{fig:overview_model}, there are three defenders with assets shown in different shades, and the loss values of specific nodes are indicated.


To reduce the attack success probabilities on edges interconnecting assets inside the network, a defender can allocate security resources on these edges.\footnote{Note that $v_s$ does not have any incoming edges, and hence, it can not be defended.} We assume that each defender $D_k$ has a security budget $B_k \in [0,\infty)$. Let $x^k_{i,j}$ denote the security investment of defender $D_k$ on the edge $(v_i,v_j)$. We define
\begin{equation} \label{eq:defense_strategy_space}
X_k := \{x_k \in \Rb^{|\EE|}_{\geq 0} | \mathbf{1}^T x_k \leq B_k\};
\end{equation}
thus $X_k$ is the set of feasible investments for defender $D_k$ and it consists of all possible non-negative investments on the edges of the graph such that the sum of these investments is upper bounded by $B_k$. We denote any particular vector of investments by defender $D_k$ as $x_k \in X_k$. Each entry of $x_k$ denotes the investment on an edge.

Let $ \mathbf{x} = \left[ x_{1}, x_{2}, \ldots, x_{\mathcal{|D|}} \right] $ be a joint defense strategy of all defenders, with $ x_{k} \in X_{k} $ for defender $ D_{k} $; thus, $\mathbf{x} \in \mathbb{R^{|\mathcal{D}| |\mathcal{E}|}_{\geq \texttt{0}}}$.
Under a joint defense strategy $\mathbf{x}$, the total investment on edge $(v_i,v_j)$ is  
$x_{i,j}\triangleq \sum_{D_k\in \mathcal{D}} x^k_{i,j} $.
%
Let $p_{i,j}:\mathbb{R}_{\geq 0}\rightarrow [0,1]$ be a function mapping the total investment $x_{i,j}$ to an attack success probability, with $p_{i,j}(0) = p_{i,j}^0$. In particular, $p_{i,j}(x_{i,j})$ is the conditional probability that an attack launched from $v_i$ to $v_j$ succeeds, given that $v_i$ has been successfully compromised.
%
%

\subsection{Adversary Model and Defender Cost Function}

In networked cyber-physical systems (CPS), there are a variety of adversaries with different capabilities that are simultaneously trying to compromise different assets. We consider an attacker model that uses stepping-stone attacks \cite{zonouz2012scpse}. In particular, for each asset in the network, we consider an attacker that starts at the entry node $v_s$ and attempts to compromise a sequence of nodes (moving along the edges of the network) until it reaches its target asset. If the attack at any intermediate node is not successful, the attacker is detected and removed from the network. Note that our formulation allows each asset to be targeted by a different attacker, potentially starting from different points in the network.

In other words, after the defense investments have been made, then for each asset in the network, the attacker chooses the path with the highest probability of successful attack for that asset (such a path is shown in red in Figure \ref{fig:overview_model}). Such attack models (where the attacker chooses one path to her target asset) have previously been considered in the literature (e.g., \cite{jain2011double,brown2011defender}).

To capture this, for a given set of security investments by the defenders, we define the {\it vulnerability} of a node $v_m \in V$ as $\displaystyle \max_{P \in \mathcal{P}_m} \prod_{(v_i,v_j) \in P} p_{i,j}(x_{i,j})$, where $\mathcal{P}_m$ is the set of all directed paths from the source $v_s$ to asset $v_m$; note that for any given path $P \in \mathcal{P}_m$, the probability of the attacker successfully compromising $v_m$ by taking the path $P$ is $\displaystyle \prod_{(v_i,v_j)\in P} p_{i,j}(x_{i,j})$, where $p_{i,j}(x_{i,j})$ is the conditional probability defined at the end of Section II-B. 
In other words, the vulnerability of each asset is defined as the maximum of the attack probabilities among all available paths to that asset.

The goal of each defender $ D_{k} $ is to choose her investment $x_k \in X_k$ in order to minimize the expected cost defined as
\begin{equation}\label{eq:defender_utility}
\hat{C}_{k}(x_{k},\mathbf{x}_{-k}) = \sum_{v_{m} \in V_{k}} L_{m} \hspace{0.3mm} \Big( \hspace{0.3mm} \underset{P \in \mathcal{P}_{m}}{\text{max}}\prod_{(v_{i},v_{j}) \in P} p_{i,j}({x_{i,j}}) \hspace{0.3mm} \Big)
\end{equation}
subject to $x_{k} \in X_{k}$, and where $ \mathbf{x}_{-k} $ is the vector of investments by defenders other than $ D_{k}$. Thus, each defender chooses her investments in order to minimize the vulnerability of her assets, i.e., the highest probability of attack among all available paths to each of her assets.\footnote{This also models settings where the specific path taken by the attacker or the attack plan is not known to the defender apriori, and the defender seeks to make the most vulnerable path to each of her assets as secure as possible.}

In the next section, we review certain classes of probability weighting functions that capture human misperception of probabilities. Subsequently, we introduce such functions into the above security game formulation, and study their impact on the investment decisions and equilibria.

\section{Nonlinear Probability Weighting and the Behavioral Security Game}\label{sec:behavioralclasses}

\subsection{Nonlinear Probability Weighting}
The behavioral economics and psychology literature has shown that humans consistently misperceive probabilities by overweighting low probabilities and underweighting high probabilities \cite{kahneman1979prospect,prelec1998probability}.  More specifically, humans perceive a ``true'' probability $p \in [0,1]$ as $w(p) \in [0,1]$, where $w(\cdot)$ is a probability weighting function.  A commonly studied probability weighting function was proposed by Prelec in \cite{prelec1998probability}, and is given by
\begin{equation}\label{eq:prelec}
w(p) = \exp\Big[-(-\log(p)\hspace{0.2mm})^{\alpha}\hspace{0.5mm} \Big],  \hspace{3mm} p\in [0,1],
\end{equation}
where $\alpha \in (0,1]$ is a parameter that controls the extent of overweighting and underweighting.  When $\alpha = 1$, we have $w(p) = p$ for all $p \in [0,1]$, which corresponds to the situation where probabilities are perceived correctly.  Smaller values of $\alpha$ lead to a greater amount of overweighting and underweighting, as illustrated in 
Figure \ref{fig:Prelec Probability weighting function}.  
Next, we incorporate this probability weighting function into the security game defined in the last section, and define the Behavioral Security Game that is the focus of this paper.
\begin{figure}
\begin{center}
  \includegraphics[width=0.5\columnwidth]{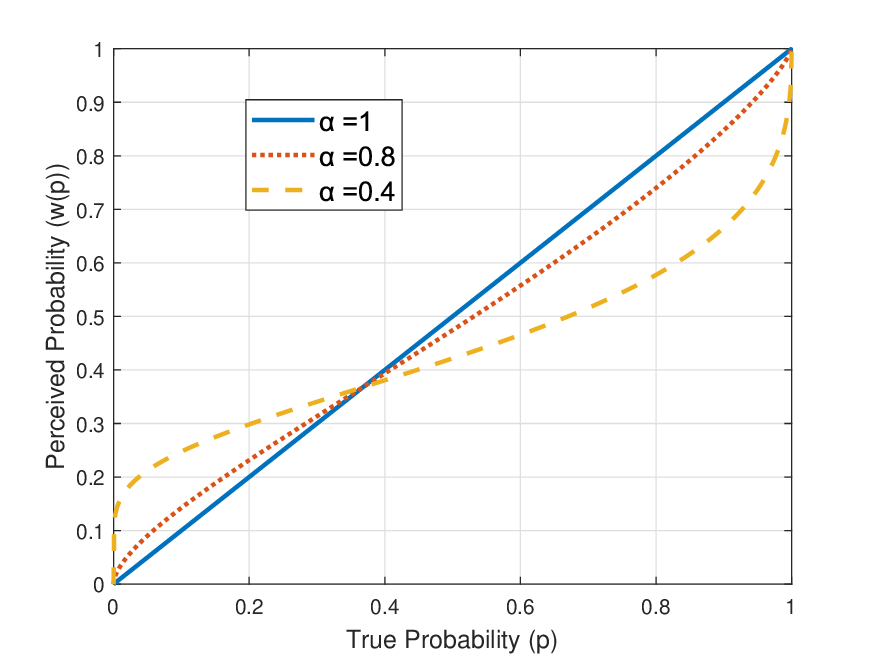}
  \caption{{Prelec probability weighting function \eqref{eq:prelec} which transforms true probabilities $ p $  into perceived probabilities $ w(p) $. The parameter $\alpha$ controls the extent of overweighting and underweighting.} 
  }
  \label{fig:Prelec Probability weighting function}
\end{center}
\end{figure}
\subsection{The Behavioral Security Game} 
Recall that each defender seeks to protect a set of assets, and the probability of each asset being successfully attacked is determined by the corresponding probabilities on the edges that constitute the paths from the source node to that asset.  This motivates a broad class of games that incorporate probability weighting, as defined below.

\begin{definition}\label{def:Edge-based Behavioral_Security_Game}
We define a {\it Behavioral Security Game} as a game between different defenders in an interdependent network, where each defender misperceives the attack probability on each edge according to the probability weighting function defined in \eqref{eq:prelec}. Specifically, the perceived attack probability by a defender $D_{k}$ on an edge $ (v_{i}, v_{j})$  is given by 
\begin{equation}
w_{k}( p_{i,j}(x_{i,j})) = \exp\Big[-(-\log(p_{i,j}(x_{i,j}))\hspace{0.2mm})^{\alpha_{k}}\hspace{0.5mm} \Big],
\label{eq:prelec_k}
\end{equation}
where $p_{i,j}(x_{i,j})\in [0,1]$ and $\alpha_k \in (0,1].$
\end{definition}

\begin{remark}
\normalfont The subscript $k$ in $\alpha_{k}$ and $w_{k}(\cdot)$ allows each defender in the Behavioral Security Game to have a different level of misperception.  We will drop the subscript $k$ when it is clear from the context.$\hfill \blacksquare$

\end{remark}

Incorporating this into the cost function \eqref{eq:defender_utility}, each defender $ D_{k} $ seeks to minimize her {\it perceived expected cost} 
\begin{equation}\label{eq:defender_utility_edge}
C_{k}(x_k, \mathbf{x}_{-k}) \!\!= \!\!\sum_{v_{m} \in V_{k}} \!\!L_{m} \left(\!\!\underset{P \in \mathcal{P}_{m}}{\text{max}}\!\prod_{(v_{i},v_{j}) \in P} w_{k}\left( p_{i,j}(x_{i,j}) \right) \right)\!.
\end{equation}
Thus, our formulation complements the existing decision-making models based on vulnerability and cost by incorporating certain behavioral biases in the cost function.

\begin{remark}
\normalfont In addition to misperceptions of probabilities, empirical evidence shows that humans perceive costs differently from their true values. In particular, humans (i) compare uncertain outcomes with a reference utility or cost, (ii) exhibit risk aversion in gains and risk seeking behavior in losses, and (iii) overweight losses compared to gains (loss aversion). A richer behavioral model, referred to as cumulative prospect theory [10], incorporates all these aspects in its cost function. However, in the setting of this paper, this richer model does not significantly change the cost functions of the defenders. Specifically, the attack on an asset is either successful or it is not. If the reference cost is zero for each asset (i.e., the default state where the asset is not attacked successfully), then successful attack constitutes a loss, and the index of loss aversion only scales the constant $L_m$ by a scalar without changing the dependence of the cost function on the investments. $\hfill \blacksquare$
\end{remark}

\subsection{Assumptions on the Probabilities of Successful Attack}
The shape of the probability weighting function \eqref{eq:prelec} presents several challenges for analysis.  In order to maintain analytical tractability, we make the following assumption on the probabilities of successful attack on each edge.

\begin{assumption}\label{ass:attack_prob} 
For every edge $(v_i,v_j)$, the probability of successful attack $p_{i,j}(x_{i,j})$ is log-convex\footnote{This is a common assumption in the literature. In particular, \cite{baryshnikov2012security} shows that log-convexity of the attack probability functions is a necessary and sufficient condition for the optimal security investment result of the seminal paper \cite{gordon2002economics} to
hold.
}, strictly decreasing, and twice continuously differentiable for $x_{i,j} \in [0,\infty)$. 
\end{assumption}

One particular function satisfying the above conditions is
\begin{equation}\label{eq:expon_prob_func}
p_{i,j}(x_{i,j})= p_{i,j}^0\exp(-x_{i,j}). 
\end{equation}
Such probability functions fall within the class commonly considered in security economics (e.g., \cite{gordon2002economics}), and we will specialize our analysis to this class for certain results in the paper.  For such functions, the (true) attack success probability of any given path $P$ from the source to a target $v_t$ is given by
\begin{equation}
\prod_{(v_m,v_n) \in P}p_{m,n}(x_{m,n}) 
= \Big(\prod_{(v_m,v_n) \in P}p_{m,n}^0\Big)\exp\Big(-  \sum_{(v_m,v_n)\in P}x_{m,n} \Big).
\label{eq:path_probability}
\end{equation}
Thus, the probability of successful attack on a given path decreases exponentially with the sum of the investments on all edges on that path by all defenders.  

\begin{remark}
\normalfont The paper \cite{hota2018game} studied this same class of security games for the case of non-behavioral defenders (i.e., with $\alpha_k = 1, \forall D_k \in \mathcal{D}$).  For that case, with  probability functions given  by \eqref{eq:expon_prob_func}, \cite{hota2018game} showed that the optimal investments for each defender can be found by solving a convex optimization problem.  Suitable modifications of the same approach to account for the parameter $\alpha_k$ will also work for determining the optimal investments by the behavioral defenders in this paper. We omit the details in the interest of space.$\hfill \blacksquare$
\end{remark}
\section{Properties of the Optimal Investment Decisions By a Single Defender}\label{sec:single_optimal_inves_dec}
We  start our analysis of the impact of behavioral decision-making by considering settings with only a single defender (i.e., $|\mathcal{D}| = 1$).  In particular, we will establish certain properties of the defender's cost function \eqref{eq:defender_utility_edge}, and subsequently identify properties of the defender's optimal investment decisions under behavioral (i.e., $\alpha < 1$) and non-behavioral (i.e., $\alpha = 1$) decision-making. This setting will help in understanding the actions (i.e., best responses) of each player in  multi-defender Behavioral Security Games, which we will consider in the next section. In this section, we will refer to the defender as $D_k$, and drop the vector $\mathbf{x}_{-k}$ from the arguments.  

\subsection{Convexity of the Cost Function}
We first prove the convexity of the defender's cost function. To do so, we start with the following result.



\begin{lemma} \label{lemma:perceived_convex}
For $\alpha_k \in (0,1)$ and  $(v_i,v_j) \in \mathcal{E}$, let $h(x_{i,j}) \triangleq (- \log(p_{i,j}(x_{i,j})))^{\alpha_k}$. Then, $h(x_{i,j})$ is strictly concave in $x_{i,j}$ for $x_{i,j} \in [0,\infty)$ under Assumption \ref{ass:attack_prob}. Moreover, $h(x_{i,j})$ is concave in $x_{i,j}$ for $\alpha_k \in (0,1]$.
\end{lemma}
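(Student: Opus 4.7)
The plan is to decompose $h$ as the composition $h(x_{i,j}) = \phi(g(x_{i,j}))$, where $g(x_{i,j}) \triangleq -\log(p_{i,j}(x_{i,j}))$ and $\phi(t) \triangleq t^{\alpha_k}$, and then invoke standard concavity-preservation results for compositions. The workhorse is the fact that a concave nondecreasing scalar function composed with a concave function is concave, strictly so under mild additional hypotheses.

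First I would establish the properties of $g$ from Assumption \ref{ass:attack_prob}. Since $p_{i,j}$ is log-convex, $\log p_{i,j}$ is convex, hence $g = -\log p_{i,j}$ is concave on $[0,\infty)$. Because $p_{i,j}(x_{i,j}) \in (0,1]$, we have $g(x_{i,j}) \geq 0$. And because $p_{i,j}$ is strictly decreasing, $g$ is strictly increasing. Next, I would record the properties of $\phi(t)=t^{\alpha_k}$ on $[0,\infty)$: for every $\alpha_k \in (0,1]$, $\phi$ is nondecreasing, and for $\alpha_k \in (0,1)$ it is strictly concave on $[0,\infty)$ (an easy check at the boundary $t=0$ uses $(1-\lambda)^{\alpha_k}>1-\lambda$ when $\lambda \in (0,1)$).

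For concavity of $h$ when $\alpha_k \in (0,1]$, I would apply the composition rule: if $\phi$ is concave and nondecreasing, and $g$ is concave, then $\phi \circ g$ is concave. For the strict-concavity claim when $\alpha_k \in (0,1)$, pick any $x_1 < x_2$ in $[0,\infty)$ and $\lambda \in (0,1)$, and chain the inequalities
\begin{equation*}
h(\lambda x_1 + (1-\lambda) x_2) \;\geq\; \phi\bigl(\lambda g(x_1) + (1-\lambda) g(x_2)\bigr) \;>\; \lambda \phi(g(x_1)) + (1-\lambda)\phi(g(x_2)),
\end{equation*}
where the first inequality uses concavity of $g$ together with the monotonicity of $\phi$, and the strict second inequality uses strict concavity of $\phi$ on $[0,\infty)$ together with the fact that $g(x_1) \ne g(x_2)$ (since $g$ is strictly increasing by Assumption \ref{ass:attack_prob}). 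This is exactly what is needed to conclude that $h$ is strictly concave.

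The only subtle point I anticipate is justifying strict concavity of $\phi(t)=t^{\alpha_k}$ all the way down to $t=0$ (rather than just on $(0,\infty)$), which is needed because $g$ can vanish when $p_{i,j}^0 = 1$. An alternative, if one prefers an analytic route, is to differentiate twice and note that
\begin{equation*}
h''(x_{i,j}) = \alpha_k(\alpha_k-1) g(x_{i,j})^{\alpha_k - 2} (g'(x_{i,j}))^2 + \alpha_k g(x_{i,j})^{\alpha_k-1} g''(x_{i,j})
\end{equation*}
is strictly negative wherever $g(x_{i,j})>0$, since $\alpha_k-1<0$, $g'(x_{i,j})>0$ (from strict monotonicity of $p_{i,j}$), and $g''(x_{i,j}) \leq 0$ (from log-convexity of $p_{i,j}$); but this route requires a separate limiting argument at any isolated zero of $g$, so the compositional approach is cleaner and is the one I would use in the write-up.
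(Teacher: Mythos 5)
Your proof is correct, and it takes a genuinely different route from the paper's. The paper argues analytically: it computes $h''$ directly and splits it into two terms, showing the first is strictly negative (using $\alpha_k<1$ and $p'<0$) and the second is non-positive (using the log-convexity inequality $(p')^2 \le p\,p''$). You instead decompose $h=\phi\circ g$ with $g=-\log p_{i,j}$ concave, non-negative, and strictly increasing, and $\phi(t)=t^{\alpha_k}$ nondecreasing and strictly concave on $[0,\infty)$, then invoke the composition rule, obtaining strictness from the strict concavity of the outer function together with $g(x_1)\neq g(x_2)$. Your approach buys two things: it isolates exactly where strictness comes from (the outer power map, not $g$, which need not be strictly concave), and it cleanly handles the boundary case $g(x_{i,j})=0$ (i.e.\ $p_{i,j}^0=1$ at $x_{i,j}=0$), where the paper's expression for $h''$ contains the factor $(-\log(p(x)))^{\alpha-2}$ and is therefore singular --- a point the paper's derivative-based argument glosses over and that you correctly flag as the reason to prefer the compositional route. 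The paper's computation, in exchange, is shorter and entirely self-contained, requiring no separate verification that $t\mapsto t^{\alpha_k}$ is strictly concave down to $t=0$. Both arguments establish the lemma.
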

\begin{proof}
For ease of notation, we drop the subscripts $i,j$, and $k$ in the following analysis. First, we focus on the case where $\alpha \in (0,1)$. Note from Assumption \ref{ass:attack_prob} that $0 < p (x) \leq 1$, and so $0 \leq -\log(p(x)) < \infty $ for all $x \in [0,\infty)$. 

Now, we prove that $h(x)$ is strictly concave:
\begin{align*}
h'(x) &= -\alpha (-\log(p(x)))^{\alpha-1} \frac{p'(x)}{p(x)} \\
h''(x) &= \alpha (\alpha-1) (-\log(p(x)))^{\alpha-2} \frac{(p'(x))^2}{(p(x))^2}\\
& + \alpha (-\log(p(x)))^{\alpha-1} \left[ \frac{(p'(x))^2 - p(x)p''(x)}{(p(x))^2} \right].
\end{align*}
From Assumption \ref{ass:attack_prob}, $p(x)$ is strictly decreasing and therefore $p'(x) < 0$. Thus, the first term on the R.H.S. of $h''(x)$ is strictly negative if $\alpha \in (0,1)$. Also, since $p(x)$ is twice-differentiable and log-convex with a convex feasible defense strategy domain $\mathbb{R}_{\ge 0}$, following \cite[Subsection 3.5.2]{boyd2004convex}, we have $(p'(x))^2 \leq p(x)p''(x)$, which ensures that the second term is non-positive. Therefore, $h(x)$ is strictly concave.

Finally, if $\alpha=1$, we have $h(x) = -\log(p(x))$, and since $p(x)$ is log-convex, $h(x)$ is concave.
\end{proof}

Using the above result, we now prove that the defender's cost function \eqref{eq:defender_utility_edge} is convex.

\begin{lemma}\label{lemma:edge_convex}
For all $\alpha_k\in (0,1]$ and under Assumption \ref{ass:attack_prob}, the cost function \eqref{eq:defender_utility_edge} of the defender $D_k$ is convex in the defense investment $x_{k}$.
\end{lemma}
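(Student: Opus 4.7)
The plan is to exploit the Prelec form of $w_k(\cdot)$ so that the max--product structure in \eqref{eq:defender_utility_edge} collapses into the exponential of a sum, at which point convexity of the composition follows by routine operations that preserve convexity, using Lemma \ref{lemma:perceived_convex} as the core ingredient.

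First I would rewrite the weighted probability on each edge as
\begin{equation*}
w_k(p_{i,j}(x_{i,j})) = \exp\bigl[-h_{i,j}(x_{i,j})\bigr], \qquad h_{i,j}(x_{i,j}) := \bigl(-\log p_{i,j}(x_{i,j})\bigr)^{\alpha_k},
\end{equation*}
so that for any path $P\in\mathcal{P}_m$,
\begin{equation*}
\prod_{(v_i,v_j)\in P} w_k(p_{i,j}(x_{i,j})) = \exp\Bigl[-\sum_{(v_i,v_j)\in P} h_{i,j}(x_{i,j})\Bigr].
\end{equation*}
Because $x_{i,j} = x^k_{i,j} + \sum_{\ell\neq k} x^\ell_{i,j}$ is an affine function of $x_k$ (with $\mathbf{x}_{-k}$ treated as fixed), Lemma \ref{lemma:perceived_convex} implies that each $h_{i,j}(x_{i,j})$ is concave in $x_k$ under Assumption \ref{ass:attack_prob} for $\alpha_k \in (0,1]$. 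Hence the finite sum $\sum_{(v_i,v_j)\in P} h_{i,j}(x_{i,j})$ is concave in $x_k$, its negative is convex, and composing with the convex nondecreasing function $\exp(\cdot)$ yields a convex function of $x_k$ for each fixed path $P$.

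Next I would use the fact that the pointwise maximum of a finite family of convex functions is convex to conclude that
\begin{equation*}
f_m(x_k) := \max_{P\in\mathcal{P}_m}\exp\Bigl[-\sum_{(v_i,v_j)\in P} h_{i,j}(x_{i,j})\Bigr]
\end{equation*}
is convex in $x_k$ for each target asset $v_m\in V_k$. Since each loss $L_m\geq 0$, the cost \eqref{eq:defender_utility_edge} is a nonnegative linear combination of the $f_m$'s, and therefore convex in $x_k$, completing the argument.

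The main obstacle is really just the first step: without the Prelec form turning products into sums via $\log$, the pathwise products of the weighted probabilities would not obviously be convex (the nonlinear weighting destroys the clean log-convex structure that \cite{hota2018game} exploited). Lemma \ref{lemma:perceived_convex} is exactly what is needed to bridge this gap, so once that lemma is in hand the present proof is essentially a chain of standard convexity-preserving operations (affine precomposition, sum, sign flip, composition with $\exp$, pointwise max, nonnegative combination) and does not require any further analytic work.
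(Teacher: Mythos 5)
Your proof is correct and follows essentially the same route as the paper: rewrite each path's perceived attack probability as $\exp(-h_P(x_k))$ with $h_P$ a separable sum of the concave terms from Lemma \ref{lemma:perceived_convex}, then apply the standard convexity-preserving operations (negation, composition with $\exp$, pointwise maximum, nonnegative linear combination). Your explicit remark that $x_{i,j}$ is affine in $x_k$ is a small additional care the paper defers to the proof of Proposition \ref{prop:existence_pne}, but it changes nothing substantive.
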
 
\begin{proof}
For each attack path $P$, define $ h_{P}(x_k) \triangleq \displaystyle \sum_{(v_i, v_j) \in P} (-\log(p_{i,j}(x_{i,j})))^{\alpha_k}$. Then, using the Prelec  function in \eqref{eq:prelec_k}, the cost in \eqref{eq:defender_utility_edge} is given by
\begin{equation*}
C_k(x_k) = \sum_{v_{m} \in V_{k}}  \hspace{0.3mm} L_m  \Big( \hspace{0.3mm} \underset{P \in \mathcal{P}_{m}}{\text{max}} \exp(-h_P(x_k)) \Big).
\end{equation*}
Note that $h_P(x_k)$ is separable and by Lemma \ref{lemma:perceived_convex}, each term in $h_P(x_k)$ is concave in a different variable (i.e., each term corresponds to a different edge $(v_i,v_j)$ in the attack path $P$).  Thus, $h_P(x_k)$ is concave in $x_k$, and so  $\exp(-h_P(x_k))$ is convex in $x_k$. 
Moreover, the maximum of  a set of convex functions is also convex \cite[Subsection 3.2.3]{boyd2004convex}. Finally, since $C_k(x_k)$ is a linear combination of convex functions, $C_k(x_k)$ is convex in $x_{k}$.
\end{proof}


\subsection{Uniqueness of Investments}
Having established the convexity of the defender's cost function \eqref{eq:defender_utility_edge}, we now observe the difference in the investment decisions made by behavioral and non-behavioral defenders.  In particular, we first show that the optimal investment decisions by a behavioral defender are unique, and then contrast that with the (generally) non-unique optimal investments for non-behavioral defenders.

\begin{proposition}\label{prop: uniqueness of investments}
Consider an attack graph $G = (V,\mathcal{E})$ and a defender $D_k$. Assume the probability of successful attack on each edge satisfies Assumption \ref{ass:attack_prob} 
and $\alpha_k \in (0,1)$ in the probability weighting function \eqref{eq:prelec_k}. 
Then, the optimal investments by defender $D_k$ to minimize \eqref{eq:defender_utility_edge} are unique.
\end{proposition}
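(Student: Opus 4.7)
The plan is to proceed by contradiction. Suppose $x_k^{(1)} \neq x_k^{(2)}$ are two optimal investments attaining the same minimum cost $C^*$. By Lemma~\ref{lemma:edge_convex}, $C_k$ is convex, so every point on the segment $x_k^\lambda := (1-\lambda)x_k^{(1)} + \lambda x_k^{(2)}$ is feasible and also optimal, making $C_k$ constant on this segment.

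Writing $C_k = \sum_{v_m \in V_k} L_m f_m$ with $f_m(x_k) = \max_{P\in\mathcal{P}_m}\exp(-h_P(x_k))$ and each $f_m$ convex (as the pointwise max of convex functions, via Lemma~\ref{lemma:perceived_convex}), the constancy of the convex sum with non-negative weights $L_m$ forces each relevant $f_m$ to be affine on the segment. Fix $\lambda = 1/2$ and let $P^*_m$ be a max-path attaining $f_m(x_k^{1/2})$; then $\exp(-h_{P^*_m}(\cdot))$ is a convex function lying weakly below the affine $f_m(\cdot)$ on $[0,1]$ and equal to it at the interior point $\lambda = 1/2$. The non-negative concave difference $f_m - \exp(-h_{P^*_m})$ therefore vanishes at an interior point, which by a standard concavity interpolation argument forces it to vanish identically, so $\exp(-h_{P^*_m}(x_k^\lambda))$ is itself affine in $\lambda$. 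On the other hand, by Lemma~\ref{lemma:perceived_convex} each $(-\log p_{i,j}(x_{i,j}))^{\alpha_k}$ is strictly concave in $x_{i,j}$ for $\alpha_k \in (0,1)$, so $-h_{P^*_m}(x_k^\lambda)$ is strictly convex in $\lambda$ whenever $x_k^{(1)}$ and $x_k^{(2)}$ disagree on any edge of $P^*_m$; composing with the strictly increasing and strictly convex $\exp$, the function $\exp(-h_{P^*_m}(x_k^\lambda))$ becomes strictly convex in $\lambda$, contradicting the affine property. Hence $x_k^{(1)}$ and $x_k^{(2)}$ must coincide on every edge of $P^*_m$ for each relevant $v_m$.

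The final step is to rule out disagreement on edges that never appear on any max-path. The idea is a budget-reallocation argument. Since $w_k(p_{i,j}(\cdot))$ is strictly decreasing in the investment and has no saturation at finite investment, any unused or ``wasted'' budget (investment on an edge not on any max-path) can be shifted to an edge of some common max-path $P^*_m$ with $f_m^* > 0$: this strictly decreases $\exp(-h_{P^*_m})$, while by continuity a sufficiently small shift keeps all non-max path values strictly below their respective maxima, yielding a strict reduction in $C_k$ and contradicting the optimality of $x_k^{(1)}$.

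The main obstacle I expect is handling \emph{ties} in which multiple max-paths for the same asset simultaneously attain $f_m^*$: shifting budget to just one edge of one tied path does not reduce $f_m$ because the other tied paths remain unchanged. This is resolved either by constructing a simultaneous multi-edge perturbation (allocating infinitesimal investment across one edge in each tied max-path) or, more cleanly, by invoking the KKT optimality conditions in subgradient form, which directly rule out stationarity whenever any budget lies outside the union of max-paths. Together with the strict-concavity-driven agreement on max-path edges established above, this completes the uniqueness argument.
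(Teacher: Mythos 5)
Your argument is correct and follows essentially the same route as the paper's: both hinge on (i) the strict convexity of each perceived path value $\exp(-h_P)$ in the investments on that path's edges when $\alpha_k<1$, which forces the two putative optima to agree on every edge of every max-path of the midpoint, and (ii) a reallocation argument showing that an optimal solution cannot place positive investment on edges off the critical paths, with ties handled exactly as you propose (spreading the shifted budget over all edges of all tied critical paths, which is what the paper does). One small fix: carry out the final reallocation at the midpoint $x_k^{1/2}$ rather than at $x_k^{(1)}$ — the midpoint is also optimal, has strictly positive investment on every edge where the two solutions differ, and is the point at which you actually established that those edges lie off every max-path (at $x_k^{(1)}$ additional max-paths could appear that pass through such an edge).
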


\begin{proof}
Consider the defender's optimization problem for the cost function in \eqref{eq:defender_utility_edge}. 
Denote a path (after investments) to be a ``critical path'' of an asset if it has the highest probability of successful attack from the source to that asset (note that multiple paths can be critical). The ``value'' of a path is its probability of successful attack (product of perceived probabilities on each edge in the path).
 
We claim that in any optimal solution $x^{*}_{k}$, every edge that has a nonzero investment must belong to some critical path. Let $(v_a,v_b)$ be an edge that does not belong to any critical path\footnote{The proof holds even if there are multiple critical paths.} and suppose by contradiction that $x^{*}_{k}$ is an optimal solution of \eqref{eq:defender_utility_edge} in which the edge $(v_a,v_b)$ has a nonzero investment. Now, remove a sufficiently small nonzero investment $\epsilon$ from the edge $(v_a,v_b)$ and spread it equally among all of the edges of the critical paths. This reduces the total attack probability on the critical paths and thereby decreases the cost in \eqref{eq:defender_utility_edge}, which yields a contradiction. This shows that our claim is true.

Now, suppose that the defender's cost function  $C_k(x_{k})$ does not have a unique minimizer. Then, there exist two different minimizers $x_k^1$ and $x_k^2$. Let $\bar{E} \subseteq \mathcal{E}$ be the set of edges where the investments are different in the two solutions. 
For each asset $v_m \in V_k$, let $\mathcal{\bar{P}}_m \subseteq \mathcal{P}_m$ be the set of all paths from the source to $v_m$ that pass through at least one edge in $\bar{E}$. Define $x_k^3 = \frac{1}{2}(x_k^1 + x_k^2)$, which must also be an optimal solution of $C_k(x_k)$ (by convexity of $C_k(x_{k})$, as established in Lemma~\ref{lemma:edge_convex}). Furthermore, a component of $x_k^3$ is nonzero whenever at least one of the corresponding components in $x_k^1$ or $x_k^2$ is nonzero.  In particular, $x_k^3$ is nonzero on each edge in $\bar{E}$.  

For any investment vector $x_k$, given a path $P$, we use $x_{k,P}$ to denote the vector of investments on edges on the path $P$. For each asset $v_m \in V_k$ and path $P \in \mathcal{P}_m$, denote $ h_{P}(x_{k,P}) \triangleq \displaystyle \sum_{(v_i, v_j) \in P} (-\log(p_{i,j}(x_{i,j})))^{\alpha_k}$. By Lemma~\ref{lemma:perceived_convex}, each term of the form $(-\log(p_{i,j}(x_{i,j})))^{\alpha_k}$ is strictly concave in $x_{i,j}$ when $\alpha_k \in (0,1)$.  Thus,  $h_{P}(x_{k,P})$ is strictly concave in $x_{k,P}$ for $\alpha_k \in (0,1)$.

Then, using  \eqref{eq:prelec_k}, the value of the path $P$ is given by 
\begin{equation*}
f_P(x_{k,P}) \triangleq
\prod_{(v_i,v_j) \in P}w_k(p_{i,j}(x_{i,j})) = \exp(-h_P(x_{k,P})).
\end{equation*}
Note that by strict concavity of $h_{P}(x_{k,P})$ in $x_{k,P}$ when $\alpha_k \in (0,1)$,  $f_P(x_{k,P})$ is strictly convex in $x_{k,P}$ when $\alpha_k \in (0,1)$.
For each asset $v_m \in V_k$, the value of each critical path is
\begin{align*}
g_m(x_k) &\triangleq \underset{P \in \mathcal{P}_{m}} {\text{max}} f_P(x_{k,P}) \\
& = {\text{max}} \left(\underset{P \in \mathcal{\bar{P}}_{m}} {\text{max}} f_P(x_{k,P}), \underset{P \in \mathcal{P}_{m} \setminus \mathcal{\bar{P}}_{m}} {\text{max}} f_P(x_{k,P})\right).
\end{align*}
Now, returning to the optimal investment vector $x_k^3$, define 
\begin{equation*}
\hat{M} \triangleq \{v_m \in V_k | \underset{P \in \mathcal{\bar{P}}_{m}} {\text{max}} f_P(x_{k,P}^3) \geq 
\underset{P \in \mathcal{P}_{m} \setminus \mathcal{\bar{P}}_{m}} {\text{max}} f_P(x_{k,P}^3) \}.
\end{equation*}
In other words, $\hat{M}$ is the set of assets for which there is a critical path (under the investment vector $x_k^3$) that passes through the set $\bar{E}$ (where the optimal investments $x_k^1$ and $x_{k}^2$ differ).  Now there are two cases. The first case is when $\hat{M}$ is nonempty.
We have (from \eqref{eq:defender_utility_edge})
\begin{align*}
& C_k(x_{k}^3) = \sum_{v_m \notin \hat{M}} L_m \hspace{1mm} g_m(x_{k}^3) + \sum_{v_m \in \hat{M}} L_m \hspace{1mm} g_m(x_{k}^3) && \\ 
& \quad \stackrel{(a)}{=} \sum_{v_m \notin \hat{M}} L_m \hspace{1mm} \underset{P \in \mathcal{P}_{m} \setminus \mathcal{\bar{P}}_{m}} {\text{max}} f_P(x_{k,P}^3)  + \sum_{v_m \in \hat{M}} L_m \hspace{1mm}  \underset{P \in \mathcal{\bar{P}}_{m}} {\text{max}} f_P(x_{k,P}^3) && \\
& \quad \stackrel{(b)}{<} \sum_{v_m \notin \hat{M}} L_m \hspace{1mm} \frac{1}{2} \underset{P \in \mathcal{P}_{m} \setminus \mathcal{\bar{P}}_{m}} {\text{max}} (f_P(x_{k,P}^1) + f_P(x_{k,P}^2)) + \sum_{v_m \in \hat{M}} L_m \hspace{1mm} \frac{1}{2} \underset{P \in \mathcal{\bar{P}}_{m}} {\text{max}} (f_P(x_{k,P}^1) + f_P(x_{k,P}^2)) && \\
& \quad \stackrel{(c)}{\leq} \sum_{v_m \notin \hat{M}} L_m \hspace{1mm} \frac{1}{2} \underset{P \in \mathcal{P}_{m}} {\text{max}} (f_P(x_{k,P}^1) + f_P(x_{k,P}^2))  + \sum_{v_m \in \hat{M}} L_m \hspace{1mm} \frac{1}{2} \underset{P \in \mathcal{P}_{m}} {\text{max}} (f_P(x_{k,P}^1) + f_P(x_{k,P}^2)) && \\ 
& \quad \hspace{5mm} && \\
& \quad \stackrel{(d)}{\leq} \frac{1}{2} \sum_{v_m \notin \hat{M}} L_m   \left(\underset{P \in \mathcal{P}_{m}} {\text{max}} f_P(x_{k,P}^1) +    \underset{P \in \mathcal{P}_{m}} {\text{max}} f_P(x_{k,P}^2)\right) + \frac{1}{2} \sum_{v_m \in \hat{M}} L_m  \left(\underset{P \in \mathcal{P}_{m}} {\text{max}} f_P(x_{k,P}^1) +    \underset{P \in \mathcal{P}_{m}} {\text{max}} f_P(x_{k,P}^2)\right) &&\\
& \quad = \frac{1}{2} \left( \sum_{v_m \in V_{k}} L_m \hspace{1mm} g_m(x_{k}^1) + \sum_{v_m \in V_{k}} L_m \hspace{1mm} g_m(x_{k}^2) \right). &&
\end{align*}
Note that (a) holds from the definition of $\hat{M}$. Also, (b) holds since for each $P \in \mathcal{\bar{P}}_{m}, f_P(x_{k,P}^3) <  \frac{1}{2}(f_P(x_{k,P}^1) + f_P(x_{k,P}^2))$ by strict convexity of $f_P$ in $x_{k,P}$ and since $x_{k,P}^3$ is a strict convex combination of $x_{k,P}^1$ and $x_{k,P}^2$ (by definition of $\mathcal{\bar{P}}_m$). Thus, for $v_m \in \hat{M}$, $\underset{P \in \mathcal{\bar{P}}_{m}} {\text{max}} f_P(x_{k,P}^3) < \underset{P \in \mathcal{\bar{P}}_{m}} {\text{max}} \frac{1}{2}(f_P(x_{k,P}^1) + f_P(x_{k,P}^2))$. Further, (c) holds since the maximum over a  subset of the  paths ($\mathcal{\bar{P}}_{m}$ or $\mathcal{P}_m \setminus \mathcal{\bar{P}}_{m}$) is less than or equal the maximum over the set of all paths $\mathcal{P}_{m}$. Finally, (d) holds as the maximum of a sum of elements is at most the sum of maxima. Thus, $C_k(x_{k}^3) <  \frac{1}{2} (C_k(x_{k}^1) + C_k(x_{k}^2))$ which yields a contradiction to the optimality of $x^1_k$ and $x^2_k$.

In the second case, suppose $\hat{M}$ is  empty. Thus, $\forall v_m \in V_k$, $\underset{P \in \mathcal{\bar{P}}_{m}} {\text{max}} f_P(x_{k,P}^3) < \underset{P \in \mathcal{P}_{m} \setminus \mathcal{\bar{P}}_{m}} {\text{max}} f_P(x_{k,P}^3)$. In other words, for all assets $v_m \in V_k$, no critical paths go through the edge set $\bar{E}$ (since $\bar{\mathcal{P}}_m$ contains all such paths).     However, $x_k^3$ has nonzero investments on edges in $\bar{E}$.   Thus, $x_k^3$ cannot be an optimal solution (by the claim at the start of the proof). Thus, the second case is also not possible.  Hence there cannot be two different optimal solutions, and therefore the optimal investments for the defender $D_k$ are unique.
\end{proof}


In contrast to the above result, the optimal investments by a non-behavioral defender (i.e., $\alpha=1$) need not be unique. To see this, consider an attack graph where the probability of successful attack on each edge is given by the exponential function \eqref{eq:expon_prob_func}.  As argued in equation \eqref{eq:path_probability}, the probability of successful attack on any given path is a function of the {\it sum} of the security investments on {\it all} the edges in that path. Thus, given an optimal set of investments by a non-behavioral defender, any other set of investments that maintains the same total investment on each path of the graph is also optimal.

\subsection{Locations of Optimal Investments for Behavioral and Non-Behavioral Defenders}

We next study differences in the {\it locations} of the optimal investments by behavioral and non-behavioral defenders.  In particular, we first characterize the optimal investments by a non-behavioral defender who is protecting a single asset, and subsequently compare that to the investments made by a behavioral defender.  In the following result, we use the notion of a {\it min-cut} in the graph.  Specifically, given two nodes $s$ and $t$ in the graph, an edge-cut is a set of edges $\mathcal{E}_{c} \subset \mathcal{E}$ such that removing $\mathcal{E}_c$ from the graph also removes all paths from $s$ to $t$.  A min-cut is an edge-cut of smallest cardinality over all possible edge-cuts \cite{west2001introduction}.

\begin{proposition}\label{prop: non-behavioral min cut}
Consider an attack graph $ G = (V, \mathcal{E})$. Let the attack success probability under security investments be given by $ p_{i,j}(x_{i,j}) = e^{-x_{i,j}} $, where  $ x_{i,j} \in \mathbb{R}_{\ge 0}$ is the investment  on edge $(v_{i},v_{j}) $.  Suppose there is a single target asset $v_t$ (i.e., all other assets have loss $0$). Let $ \mathcal{E}_{c} \subseteq \mathcal{E} $ be a min-cut between the source node $ v_s $ and the target $ v_t $. Then, it is optimal for a non-behavioral defender $ D_k $ to distribute all her investments equally only on the edge set $ \mathcal{E}_{c} $ in order to minimize \eqref{eq:defender_utility}.
\end{proposition}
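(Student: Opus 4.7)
The plan is to convert the defender's cost minimization into a max--min shortest-path linear program and then invoke Menger's theorem to pin down the optimizer. With $p_{i,j}(x_{i,j}) = e^{-x_{i,j}}$ and only $v_t$ carrying nonzero loss, the cost in \eqref{eq:defender_utility} collapses to
\[
\hat{C}_{k}(x_k) \;=\; L_t \exp\!\Big(-\min_{P \in \mathcal{P}_t} \sum_{(v_i,v_j) \in P} x_{i,j}\Big),
\]
so the defender effectively wants to maximize the shortest $v_s$-to-$v_t$ ``length'' $\min_{P \in \mathcal{P}_t} \sum_{e\in P} x_e$ subject to $\sum_{e\in\mathcal{E}} x_e \le B_k$ and $x_e \ge 0$. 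This is a linear-programmatic reformulation that removes the nonlinearity of the cost entirely.

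For the upper bound, I would appeal to Menger's theorem: since $\mathcal{E}_c$ is an $s$-$t$ min-cut of size $|\mathcal{E}_c|$, there exist $|\mathcal{E}_c|$ pairwise edge-disjoint $v_s$-to-$v_t$ paths $P_1,\ldots,P_{|\mathcal{E}_c|}$. Summing investments across these disjoint paths gives
\[
\sum_{i=1}^{|\mathcal{E}_c|} \sum_{(v_m,v_n) \in P_i} x_{m,n} \;\le\; \sum_{(v_m,v_n)\in\mathcal{E}} x_{m,n} \;\le\; B_k,
\]
so by averaging at least one $P_i$ has total investment at most $B_k/|\mathcal{E}_c|$. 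Hence $\min_{P \in \mathcal{P}_t} \sum_{e \in P} x_e \le B_k/|\mathcal{E}_c|$ for every feasible $x_k$, which gives a matching lower bound on $\hat{C}_k$.

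For achievability, I would verify the candidate strategy $x_{i,j}^{*} = B_k/|\mathcal{E}_c|$ for $(v_i,v_j)\in\mathcal{E}_c$ and $x_{i,j}^{*} = 0$ otherwise. It is feasible with total investment exactly $B_k$, and because $\mathcal{E}_c$ is an edge-cut every $v_s$-to-$v_t$ path meets $\mathcal{E}_c$ in at least one edge, so every path has investment sum at least $B_k/|\mathcal{E}_c|$, with equality for any path crossing $\mathcal{E}_c$ exactly once. This matches the upper bound, so $x^{*}$ is optimal and the proposition follows. The main hurdle is the upper bound: without Menger's theorem (equivalently max-flow / min-cut LP duality) one cannot rule out some clever non-cut-concentrated investment that somehow pushes every path's sum strictly above $B_k/|\mathcal{E}_c|$; the edge-disjoint packing is precisely what forbids this.
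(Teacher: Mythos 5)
Your proof is correct and follows essentially the same route as the paper's: both reduce the problem to maximizing the minimum path-investment sum, invoke Menger's theorem to obtain $|\mathcal{E}_c|$ edge-disjoint $v_s$--$v_t$ paths and average to bound the shortest path by $B_k/|\mathcal{E}_c|$, and then check that the uniform investment on the cut attains this bound because every path crosses the cut. Your achievability step also directly certifies the \emph{equal} split as optimal, which subsumes the paper's separate final argument that unequal splits on the cut are strictly worse.
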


\begin{proof}
Let $ N = |\mathcal{E}_c|$ represent the number of edges in the min-cut set $ \mathcal{E}_{c} $. Let $ B $ be the defender's budget. 

Consider any optimal investment of that budget. Recall from \eqref{eq:path_probability} that for probability functions of the form \eqref{eq:expon_prob_func}, the probability of a successful attack of the target along a certain path $P$ is a decreasing function of the sum of the investments on the edges on that path.  Using Menger's theorem \cite{west2001introduction}, there are $ N $ edge-disjoint paths between $ v_s $ and $v_t$ in $ G $. At least one of those paths has total investment at most $\frac{B}{N}$.  Therefore, the path with highest probability of attack from $ v_s $ to $v_t$ has total investment at most $ \frac{B}{N} $.

Now consider investing $ \frac{B}{N} $ on each edge in the min-cut. Since every path from $ v_s $ to $v_t$ goes through at least one edge in $ \mathcal{E}_{c} $, every path has at least $\frac{B}{N}$ in total investment. Thus, it is optimal to only invest on edges in $\mathcal{E}_{c}$.

Finally, consider investing non-equally on edges in $\mathcal{E}_c$ where an edge $(v_i, v_j) \in \mathcal{E}_{c}$ has investment $x_{i,j} <  \frac{B}{N}$. Under this investment, since there are $N$ edge-disjoint paths from $v_s$ to $v_t$ in $G$, there exists a path $ P $ from $v_s$ to $v_t$ that has total investment less than $\frac{B}{N}$. Thus, the path with the highest probability of attack has a probability of attack larger than $\exp(-\frac{B}{N})$ (which would be obtained when investing $\frac{B}{N}$ equally on each edge in $\mathcal{E}_c$). Therefore, the true expected cost in \eqref{eq:defender_utility} is higher with this non-equal investment. Thus, the optimal investment on $\mathcal{E}_c$ must contain $\frac{B}{N}$ investment on each edge in $\mathcal{E}_c$.
\end{proof}

\begin{remark}
\normalfont The above result will continue to hold for more general probability functions $p_{m,n}(x_{m,n}) = p_{m,n}^0 e^{-x_{m,n}}$ with $p_{m,n}^0 \neq 1$ if $\displaystyle {\prod_{(v_{m},v_{n}) \in P} p_{m,n}^0}$ is the same for every path $P \in \mathcal{P}_{t}$. The baseline successful attack probability is then the same along every path to $v_{t}$, and thus optimal investments can be restricted to the edges in the min-cut set. $\hfill \blacksquare$
\end{remark}
The conclusion of Proposition~\ref{prop: non-behavioral min cut} no longer holds when we consider the investments by a behavioral defender (i.e., with $\alpha_k < 1$), as illustrated by the following example.

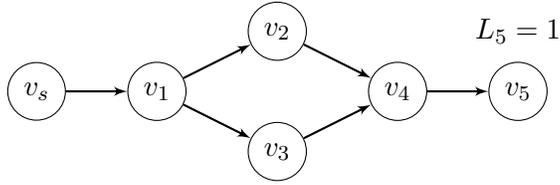
\begin{figure}
\centering
\begin{tikzpicture}[scale=0.8]

\tikzset{edge/.style = {->,> = latex'}}

\node[draw,shape=circle] (vs) at (-2,0) {$v_s$};
\node[draw,shape=circle] (v1) at (0,0) {$v_1$};
\node[draw,shape=circle] (v2) at (2,1) {$v_2$};
\node[draw,shape=circle] (v3) at (2,-1) {$v_3$};
\node[draw,shape=circle] (v4) at (4,0) {$v_4$};
\node[draw,shape=circle] (v5) at (6,0) {$v_5$};

\node[shape=circle] (L1) at (6,1) {$L_5=1$};

\draw[edge,thick] (vs) to (v1);
\draw[edge,thick] (v1) to (v2);
\draw[edge,thick] (v1) to (v3);
\draw[edge,thick] (v2) to (v4);
\draw[edge,thick] (v3) to (v4);
\draw[edge,thick] (v4) to (v5);

\end{tikzpicture}
\caption{ An attack graph where a behavioral defender makes suboptimal investment decisions.}
\label{fig:split_join}
\end{figure}

\begin{example}
Consider the attack graph shown in Figure~\ref{fig:split_join}, with a single defender $D$ (we will drop the subscript $k$ for ease of notation in this example) and a single target asset $v_5$ with a loss of $L_5 = 1$ if successfully attacked.  Let the defender's budget be $B$, and let the probability of successful attack on each edge $(v_i, v_j)$ be given by $p_{i,j}(x_{i,j}) = e^{-x_{i,j}}$,
where $x_{i,j}$ is the investment on that edge.

This graph has two possible min-cuts, both of size $1$: the edge $(v_s, v_1)$, and the edge $(v_4, v_5)$.  Thus, by Proposition~\ref{prop: non-behavioral min cut}, it is optimal for a non-behavioral defender to put all of her budget on either one of these edges.  

Now consider a behavioral defender with $\alpha < 1$.  With the above expression for $p_{i,j}(x_{i,j})$ and using the Prelec function \eqref{eq:prelec_k}, we have
$w(p_{i,j}(x_{i,j})) = e^{-x_{i,j}^{\alpha}}$. 
Thus, the perceived expected cost function \eqref{eq:defender_utility_edge}  is given by
\begin{equation*}
C(\mathbf{x}) =  \max \left(e^{-x_{s,1}^{\alpha} - x_{1,2}^{\alpha} - x_{2,4}^{\alpha} - x_{4,5}^{\alpha}},e^{-x_{s,1}^{\alpha} - x_{1,3}^{\alpha} - x_{3,4}^{\alpha} - x_{4,5}^{\alpha}}\right),
\end{equation*}
corresponding to the two paths from the source $v_s$ to the target $v_t$.  One can verify (using the KKT conditions) that the optimal investments are given by 
\begin{equation}\label{eq: behavioral optimal investments}
\begin{aligned}
x_{1,2} &= x_{2,4} = x_{1,3} = x_{3,4} = 2^{\frac{1}{\alpha-1}} x_{s,1} ~,\\
x_{4,5} &= x_{s,1} = \frac{B-4x_{1,2}}{2}= \frac{B}{2+4( 2^{\frac{1}{\alpha-1}}) }~. 
\end{aligned}
\end{equation}
Thus, for the true expected cost function \eqref{eq:defender_utility}, the optimal investments (corresponding to  the non-behavioral defender) yield a true expected cost of $ e^{-B}$, whereas the investments of the behavioral defender yield a true expected cost of $ e^{-2^\frac{\alpha}{\alpha-1}} e^{-\frac{B}{1+ 2^{\frac{\alpha}{\alpha-1}} }}$, which is larger than that of the non-behavioral defender. %
\end{example}

The above example illustrates a key phenomenon: as the defender's perception of probabilities becomes increasingly skewed (captured by $\alpha$ becoming smaller), she shifts more of her investments from the min-cut edges to the edges on the parallel paths between $v_1$ and $v_4$.  This is in contrast to the optimal investments (made by the non-behavioral defender) which lie entirely on the min-cut edges.  Indeed, by taking the limit as $ \alpha \uparrow 1$, we have 
\begin{equation*}
x_{i,j} = \lim_{\alpha \uparrow 1} \hspace{1mm} 2^ \frac{1}{\alpha -1} \hspace{1mm} x_{s,1} = 2^{-\infty} \hspace{1mm} x_{s,1} = 0
\end{equation*}
for edges $(v_i,v_j)$ on the two parallel portions of the graph.

We now use this insight to identify graphs where the behavioral defender finds that investing only on the min-cut edges is not optimal.

\begin{proposition}\label{prop:behavioral-suboptimal}
Consider an attack graph $G$ with a source $v_s$ and a target $v_t$.  Let $\mathcal{E}_{c}$ be a min-cut between $v_s$ and $v_t$, with size $|\mathcal{E}_{c}| = N$.  Suppose the graph contains another edge cut
$\mathcal{E}^{'}_{c}$ such that $\mathcal{E}^{'}_{c} \cap \mathcal{E}_{c} = \emptyset$ and $|\mathcal{E}^{'}_{c}| > |\mathcal{E}_{c}|$. 
Let the probability of successful attack on each edge $(v_i, v_j) \in \mathcal{E}$ be given by $p_{i,j}(x_{i,j}) = e^{-x_{i,j}}$, where $x_{i,j}$ is the investment on that edge.  Let $B$ be the budget of the defender.  Then, if $0 < \alpha_k < 1$, investing solely on the min-cut set $\mathcal{E}_{c}$ is not optimal from the perspective of a behavioral defender. 
\end{proposition}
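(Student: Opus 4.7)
The plan is to show that any investment strategy with support entirely in the min-cut $\mathcal{E}_c$ is strictly dominated (in perceived expected cost) by a nearby strategy that shifts some budget onto $\mathcal{E}'_c$. Combined with the best achievable value of an ``only-$\mathcal{E}_c$'' strategy, this will contradict optimality.

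First I would pin down the best ``only-$\mathcal{E}_c$'' strategy. Because $\mathcal{E}_c$ is a minimum cut, it is minimal in the subset sense: for every $e \in \mathcal{E}_c$, the set $\mathcal{E}_c \setminus \{e\}$ has fewer edges than the min-cut and hence cannot itself be a cut, so there exists a $v_s$--$v_t$ path $P_e$ whose only intersection with $\mathcal{E}_c$ is $\{e\}$. With $p_{i,j}(x_{i,j}) = e^{-x_{i,j}}$ and the Prelec weighting \eqref{eq:prelec_k}, $w_k(p_{i,j}(x_{i,j})) = e^{-x_{i,j}^{\alpha_k}}$, so for an investment $x$ supported only on $\mathcal{E}_c$ the perceived attack probability along $P_e$ is $\exp(-x_e^{\alpha_k})$. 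Any path using $k \geq 1$ min-cut edges contributes $\exp\bigl(-\sum_i x_{e_i}^{\alpha_k}\bigr)$, which is at most $\exp(-x_e^{\alpha_k})$ for any $e$ on that path. Therefore the max-over-paths equals $\exp\bigl(-(\min_{e \in \mathcal{E}_c} x_e)^{\alpha_k}\bigr)$, and it is minimized (subject to $\sum_e x_e \leq B$) by the equalized choice $x_e = B/N$ for each $e$, giving perceived cost exactly $L_t \exp\bigl(-(B/N)^{\alpha_k}\bigr)$. Any other ``only-$\mathcal{E}_c$'' allocation is strictly worse.

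Next I would construct an explicit perturbation. Let $N = |\mathcal{E}_c|$ and $N' = |\mathcal{E}'_c|$. Starting from the equalized investment, subtract $\epsilon > 0$ from each edge of $\mathcal{E}_c$ and redistribute the freed $N\epsilon$ equally among the $N'$ edges of $\mathcal{E}'_c$, so each of them receives $N\epsilon/N'$. Since $\mathcal{E}_c$ and $\mathcal{E}'_c$ are disjoint cuts, every $v_s$--$v_t$ path uses at least one edge from each, so for any such path $P$,
\begin{equation*}
\sum_{(v_i,v_j) \in P} x_{i,j}^{\alpha_k} \;\geq\; (B/N - \epsilon)^{\alpha_k} + (N\epsilon/N')^{\alpha_k}.
\end{equation*}
Hence the perturbed perceived cost is upper bounded by $L_t \exp\bigl(-\bigl((B/N-\epsilon)^{\alpha_k} + (N\epsilon/N')^{\alpha_k}\bigr)\bigr)$.

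The final step, which is the crux of the proof, is to show that
\begin{equation*}
f(\epsilon) \;:=\; (B/N-\epsilon)^{\alpha_k} + (N\epsilon/N')^{\alpha_k} - (B/N)^{\alpha_k}
\end{equation*}
is strictly positive for all sufficiently small $\epsilon > 0$. One has $f(0) = 0$ and
\begin{equation*}
f'(\epsilon) \;=\; -\alpha_k (B/N-\epsilon)^{\alpha_k-1} + \alpha_k (N/N')(N\epsilon/N')^{\alpha_k-1}.
\end{equation*}
Because $\alpha_k \in (0,1)$ makes $\alpha_k - 1 < 0$, the second term diverges to $+\infty$ as $\epsilon \downarrow 0$ while the first remains bounded, so $f'(\epsilon) > 0$ on a right-neighborhood of $0$, forcing $f(\epsilon) > 0$ there. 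The perturbed strategy is then strictly cheaper than every ``only-$\mathcal{E}_c$'' strategy, so investing solely on $\mathcal{E}_c$ cannot be optimal. The main obstacle is precisely this divergence argument: the infinite marginal gain obtained from the very first unit of investment on a previously-unprotected edge, a consequence of the concavity of $x \mapsto x^{\alpha_k}$ at $x = 0$, is what distinguishes the behavioral regime from $\alpha_k = 1$, where the marginal gain stays bounded and min-cut investment remains optimal as in Proposition~\ref{prop: non-behavioral min cut}.
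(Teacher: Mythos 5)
Your proposal is correct and follows essentially the same route as the paper: perturb by removing $\epsilon$ from each min-cut edge, redistribute $N\epsilon/|\mathcal{E}'_c|$ onto the disjoint cut, and exploit the fact that the derivative of $\epsilon \mapsto (N\epsilon/|\mathcal{E}'_c|)^{\alpha_k}$ diverges as $\epsilon \downarrow 0$ while the loss term stays bounded. Your preliminary step characterizing the best only-$\mathcal{E}_c$ allocation (via minimality of the min-cut) is a welcome addition that makes the argument explicitly cover \emph{every} strategy supported on $\mathcal{E}_c$, a point the paper's proof handles only implicitly by perturbing the uniform allocation.
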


\begin{proof}
Denote $M = |\mathcal{E}^{'}_{c}| > |\mathcal{E}_{c}| = N$.  By Proposition~\ref{prop: non-behavioral min cut}, it is optimal to invest the entire budget uniformly on edges in $\mathcal{E}_{c}$ in order to minimize the cost function \eqref{eq:defender_utility}.  We will show that this investment is not optimal with respect to the behavioral defender's cost function \eqref{eq:defender_utility_edge}; we will drop the subscript $k$ in $\alpha_k$ for ease of notation.

Starting with the optimal investments on the min edge cut $\mathcal{E}_{c}$ where each edge in $\mathcal{E}_{c}$ has nonzero investment (as given
by Proposition~\ref{prop: non-behavioral min cut}), remove a small  investment $\epsilon$ from each of those $N$ edges, and add an investment of $\frac{N\epsilon}{M}$ to each of the edges in $\mathcal{E}^{'}_{c}$.  
We show that when $\epsilon$ is sufficiently small, this will lead to a net reduction in perceived probability of successful attack on each path from $v_s$ to $v_t$. 

Consider any arbitrary path $P$ from $v_s$ to $v_t$. Starting with the investments only on the minimum edge cut $\mathcal{E}_{c}$, the perceived probability of successful attack on path $P$ will be 
\begin{equation*}
f_{1}(\mathbf{x}) \triangleq  \exp\Big(- \sum_{\substack{(v_i,v_j) \in \mathcal{E}_{c},\\(v_i,v_j) \in P}} x_{i,j}^{\alpha}\Big). 
\end{equation*}
After removing $\epsilon$ investment from each of the $N$ edges in $\mathcal{E}_{c}$, and adding an investment of $\frac{N\epsilon}{M}$ to each of the edges in $\mathcal{E}^{'}_{c}$, the perceived probability on path $P$ will be: 
\begin{equation*}
f_{2}(\mathbf{x}) \triangleq \exp\Big(-\sum_{\substack{(v_i,v_j) \in \mathcal{E}^{'}_{c}, \\ (v_i,v_j) \in P}}  \Big(\frac{N\epsilon}{M}\Big)^{\alpha} - \sum_{\substack{(v_i,v_j) \in \mathcal{E}_{c}, \\ (v_i,v_j) \in P}} (x_{i,j} - \epsilon)^{\alpha}  \Big). 
\end{equation*}
The net reduction in perceived probability on path $P$ will be positive if 
$f_{2}(\mathbf{x}) < f_{1}(\mathbf{x})$, i.e., 
\begin{equation}
\sum_{\substack{(v_i,v_j) \in \mathcal{E}^{'}_{c},\\ (v_i,v_j) \in P }}  \left(\frac{N\epsilon}{M}\right)^{\alpha} + \sum_{\substack{(v_i,v_j) \in \mathcal{E}_{c}, \\ (v_i,v_j) \in P}} (x_{i,j} - \epsilon)^{\alpha} > \sum_{\substack{(v_i,v_j) \in \mathcal{E}_{c},\\ (v_i,v_j) \in P}} x_{i,j}^{\alpha}.
\label{eq:f2_lt_f1}
\end{equation}
If we define
$$
f(\epsilon) \triangleq \sum_{\substack{(v_i,v_j) \in \mathcal{E}^{'}_{c}, \\ (v_i,v_j) \in P}}  \Big(\frac{N\epsilon}{M}\Big)^{\alpha} + \sum_{\substack{(v_i,v_j) \in \mathcal{E}_{c},\\ (v_i,v_j) \in P}} (x_{i,j} - \epsilon)^{\alpha},
$$
we see that inequality \eqref{eq:f2_lt_f1} is equivalent to showing that $f(\epsilon) > f(0)$.  We have
$$
\frac{df}{d\epsilon} = \frac{\alpha N}{M} \sum_{\substack{(v_i,v_j) \in \mathcal{E}^{'}_{c},\\ (v_i,v_j) \in P}} \Big(\frac{N\epsilon}{M}\Big)^{\alpha-1}  - \alpha \sum_{\substack{(v_i,v_j) \in \mathcal{E}_{c}, \\ (v_i,v_j) \in P}} (x_{i,j} - \epsilon)^{\alpha - 1}.
$$
Note that $\lim_{\epsilon \downarrow 0} \frac{df}{d\epsilon} = \infty$ which shows that $f(\epsilon)$ is increasing in $\epsilon$ for sufficiently small $\epsilon$. Therefore, $f_{2}(\mathbf{x}) < f_{1}(\mathbf{x})$ for sufficiently small $\epsilon$. Since this analysis holds for every path from $v_s$ to $v_t$, this investment profile outperforms investing purely on the minimum edge cut. 
\end{proof}

Note that the graph in Figure~\ref{fig:split_join} satisfies the conditions in the above result, with $\mathcal{E}_{c} = {(v_4,v_5)}$, $\mathcal{E}^{'}_{c} = \{(v_1,v_2), (v_1,v_3)\}$. 

Having established properties of the optimal investment decisions for behavioral and non-behavioral defenders, we next turn our attention to the Behavioral Security Game with multiple defenders, introduced in Section~\ref{sec:behavioralclasses}.

\section{Analysis of Multi-Defender Games}\label{sec: PNE}

\begin{figure*}[t!] 
\centering
\begin{subfigure}[t]{.46\textwidth}
\centering
  \begin{tikzpicture}[scale=1]

\tikzset{edge/.style = {->,> = latex'}}

\node[draw,shape=circle] (x1) at (-1,1) {$v_s$};
\node[draw,shape=circle] (x2) at (1,1) {$v_1$};
\node[draw,shape=circle] (x3) at (3,1) {$v_2$};
\node[draw,shape=circle] (x4) at (-1,-1) {$v_3$};
\node[draw,shape=circle] (x5) at (1,-1) {$v_4$};
\node[draw,shape=circle] (x6) at (3,-1) {$v_5$};

\node[shape=circle] (L1) at (1,-1.5) {$L_1=1$};
\node[shape=circle] (L2) at (3,-1.5) {$L_2=1$};
\draw[edge,thick] (x1) to[edge node={node[above,xshift=-0cm]{$4$}},edge node={node[below,xshift=0cm]{$0$}}] (x2);
\draw[edge,thick] (x2) to[edge node={node[above,xshift=-0cm]{$0$}},edge node={node[below,xshift=0cm]{$4$}}] (x3);
\draw[edge,thick] (x1) to[edge node={node[left,xshift=-0cm]{$4$}},edge node={node[right,xshift=0cm]{$0$}}] (x4);
\draw[edge,thick] (x2) to[edge node={node[left,xshift=-0cm]{$4$}},edge node={node[right,xshift=0cm]{$0$}}] (x5);
\draw[edge,thick] (x3) to[edge node={node[left,xshift=-0cm]{$0$}},edge node={node[right,xshift=0cm]{$4$}}] (x6);
\draw[edge,thick] (x4) to[edge node={node[above,xshift=-0cm]{$4$}},edge node={node[below,xshift=0cm]{$0$}}] (x5);
\draw[edge,thick] (x5) to[edge node={node[above,xshift=-0cm]{$0$}},edge node={node[below,xshift=0cm]{$4$}}] (x6);
\end{tikzpicture}
\caption{}
\label{fig:MPNE_example_both_behavioral_1}
\end{subfigure}
\begin{subfigure}[t]{.46\textwidth}
\centering
 \begin{tikzpicture}[scale=1]

\tikzset{edge/.style = {->,> = latex'}}

\node[draw,shape=circle] (x1) at (-1,1) {$v_s$};
\node[draw,shape=circle] (x2) at (1,1) {$v_1$};
\node[draw,shape=circle] (x3) at (3,1) {$v_2$};
\node[draw,shape=circle] (x4) at (-1,-1) {$v_3$};
\node[draw,shape=circle] (x5) at (1,-1) {$v_4$};
\node[draw,shape=circle] (x6) at (3,-1) {$v_5$};

\node[shape=circle] (L1) at (1,-1.5) {$L_1=1$};
\node[shape=circle] (L2) at (3,-1.5) {$L_2=1$};
\draw[edge,thick] (x1) to[edge node={node[above,xshift=-0cm]{$1$}},edge node={node[below,xshift=0cm]{$4$}}] (x2);
\draw[edge,thick] (x2) to[edge node={node[above,xshift=-0cm]{$0$}},edge node={node[below,xshift=0cm]{$3.14$}}] (x3);
\draw[edge,thick] (x1) to[edge node={node[left,xshift=-0cm]{$5$}},edge node={node[right,xshift=0cm]{$0$}}] (x4);
\draw[edge,thick] (x2) to[edge node={node[left,xshift=-0cm]{$5$}},edge node={node[right,xshift=0cm]{$0$}}] (x5);
\draw[edge,thick] (x3) to[edge node={node[left,xshift=-0cm]{$0$}},edge node={node[right,xshift=0cm]{$3.14$}}] (x6);
\draw[edge,thick] (x4) to[edge node={node[above,xshift=-0cm]{$5$}},edge node={node[below,xshift=0cm]{$0$}}] (x5);
\draw[edge,thick] (x5) to[edge node={node[above,xshift=-0cm]{$0$}},edge node={node[below,xshift=0cm]{$1.72$}}] (x6);
\end{tikzpicture}
 \caption{}
 \label{fig:MPNE_example_both_behavioral_2}
 \end{subfigure} 
\caption{An instance of a Behavioral Security Game with multiple PNE. Defenders $D_{1}$ and $D_{2}$ are behavioral decision-makers with $\alpha_{1} = \alpha_{2} = 0.5$. The numbers above/left and below/right of the edges represent investments by $D_1$ and $D_2$, respectively.}
\label{fig:MPNE_main}
\end{figure*}
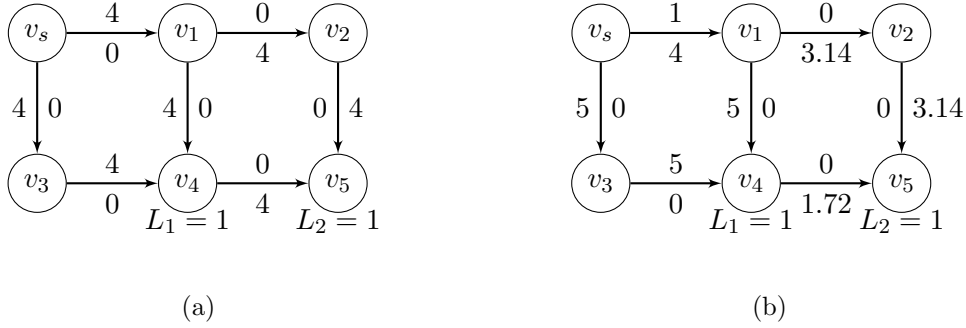


\subsection{Existence of a PNE}
We first establish the existence of a Pure Strategy Nash Equilibrium (PNE) for the class of behavioral games defined in Section~\ref{sec:behavioralclasses}. Recall that a profile of security investments by the defenders is said to be a PNE if no defender can decrease her cost by unilaterally changing her security investment.

\begin{proposition}\label{prop:existence_pne}
Under Assumption \ref{ass:attack_prob}, the Behavioral Security Game possesses a pure strategy Nash equilibrium (PNE) when $\alpha_{k} \in (0,1]$ for each defender $D_{k}$.
\end{proposition}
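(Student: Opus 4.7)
The plan is to invoke the standard Debreu--Glicksberg--Fan existence theorem for pure Nash equilibria in continuous concave games. For this, I need to verify three conditions for each defender $D_k$: (i) the strategy set $X_k$ is a nonempty, compact, convex subset of a Euclidean space; (ii) the cost $C_k(x_k,\mathbf{x}_{-k})$ is continuous jointly in the full strategy profile $\mathbf{x}$; and (iii) for every fixed $\mathbf{x}_{-k}$, the map $x_k \mapsto C_k(x_k,\mathbf{x}_{-k})$ is convex on $X_k$.

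First, inspecting \eqref{eq:defense_strategy_space}, the set $X_k$ is the intersection of the nonnegative orthant $\Rb^{|\EE|}_{\ge 0}$ with the half-space $\{x_k : \mathbf{1}^T x_k \le B_k\}$. This is a nonempty bounded polytope and hence compact and convex. Second, by Assumption~\ref{ass:attack_prob}, each $p_{i,j}(\cdot)$ is continuous on $[0,\infty)$ with values in $(0,1]$, and the Prelec weight $w_k(\cdot)$ in \eqref{eq:prelec_k} is continuous on $(0,1]$, so each path-product $\prod_{(v_i,v_j)\in P}w_k(p_{i,j}(x_{i,j}))$ is continuous in $\mathbf{x}$. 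Since $\mathcal{P}_m$ is finite, the pointwise maximum over $P \in \mathcal{P}_m$ is continuous, and the outer finite weighted sum preserves continuity, giving joint continuity of $C_k$.

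Third, for the convexity condition, I will lift Lemma~\ref{lemma:edge_convex} to the multi-defender setting. Fix $\mathbf{x}_{-k}$. Each total edge investment decomposes as $x_{i,j} = x_{i,j}^k + \sum_{l\ne k} x_{i,j}^l$, which is an affine function of $x_k$ since the second sum is a constant in this scenario. Lemma~\ref{lemma:edge_convex} establishes that, viewed as a function of the total per-edge investments, $C_k$ is convex. Since convexity is preserved under composition with an affine map, $x_k \mapsto C_k(x_k,\mathbf{x}_{-k})$ is convex on $X_k$ for every $\mathbf{x}_{-k}$ and every $\alpha_k \in (0,1]$.

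With (i)--(iii) verified, the Debreu--Glicksberg--Fan theorem (or equivalently Rosen's theorem for concave games) guarantees the existence of a pure strategy Nash equilibrium, completing the proof. The only nontrivial ingredient is condition (iii), which has already been done by Lemmas~\ref{lemma:perceived_convex}--\ref{lemma:edge_convex}; the remaining work is essentially bookkeeping to note that those lemmas carry over once one observes that total edge investments are affine in $x_k$ when opponents' strategies are held fixed.
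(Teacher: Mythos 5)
Your proposal is correct and follows essentially the same route as the paper: verify that each $X_k$ is a nonempty compact convex polytope, use Lemma~\ref{lemma:edge_convex} together with the fact that the total edge investment $x_{i,j}$ is affine in $x_k$ for fixed $\mathbf{x}_{-k}$ to get convexity of $C_k$ in own strategy, and then invoke the concave-games existence theorem of Rosen. Your explicit check of joint continuity is a minor addition the paper leaves implicit, but the argument is otherwise identical.
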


\begin{proof}
The feasible defense strategy space $ X_{k} $ in \eqref{eq:defense_strategy_space} is nonempty, compact and convex for each defender $ D_{k} $.  Furthermore, for all $D_k \in \mathcal{D}$ and investment vectors $\mathbf{x}_{-k}$, the cost function $C(x_k, \mathbf{x}_{-k})$ in \eqref{eq:defender_utility_edge} is convex in $x_k \in X_k$; this follows from Lemma~\ref{lemma:edge_convex} and the fact that the investment $x_{i,j}$ on each edge is a sum of the investments of all players on that edge.  As a result, the Behavioral Security Game is an instance of {\it concave games}, which always have a PNE \cite{rosen1965existence}.
\end{proof}

Note that in contrast to the best responses by each player (which were unique when $\alpha_k \in (0,1)$, as shown in Proposition~\ref{prop: uniqueness of investments}), the PNE of Behavioral Security Games is not unique in general.  We illustrate this through the following example.

\begin{example}\label{ex:MPNE_both_behavioral}
Consider the attack graph of Figure \ref{fig:MPNE_main}. There are two defenders, $D_1$ and $D_2$,  where
defender $D_1$ wishes to protect node $v_4$, and defender $D_2$ wishes to protect node $v_5$. Suppose that $D_1$ has a budget $B_{1} = 16$ and $D_{2}$ has $B_{2} = 12$. Figs.  \ref{fig:MPNE_example_both_behavioral_1} and \ref{fig:MPNE_example_both_behavioral_2} illustrate two distinct PNE for this game. We obtained multiple Nash equilibria by varying the starting investment decision of defender $D_{1}$ and  then following best response dynamics until the investments converged to an equilibrium.

It is interesting to note that these two Nash equilibria lead to different costs for the defenders. First, for the Nash equilibrium of Figure \ref{fig:MPNE_example_both_behavioral_1}, defender $D_{1}$'s  perceived expected cost, given by \eqref{eq:defender_utility_edge}, is equal to $\exp(-4)$, while her true expected cost, given by \eqref{eq:defender_utility}, is equal to  $\exp(-8)$. Defender $D_{2}$ has a perceived expected cost of $\exp(-6)$, and a true expected cost of $\exp(-12)$.  
In contrast, for the Nash equilibrium in Figure \ref{fig:MPNE_example_both_behavioral_2}, defender $D_{1}$ has a perceived expected cost of $\exp(-2 \enspace \sqrt[]{5})$ and a true expected cost of $\exp(-10)$. Defender $D_{2}$ has a perceived expected cost of $\exp(-5.78)$ and a true expected cost of $\exp(-11.28)$. 

As a result, the equilibrium in Figure \ref{fig:MPNE_example_both_behavioral_1} is preferred by defender $D_{2}$,  while the equilibrium in Figure \ref{fig:MPNE_example_both_behavioral_2} has a lower expected cost (both perceived and real) for defender $D_{1}$. Note also that the total expected cost (i.e., sum of the true expected costs of defenders $D_{1}$ and $D_{2}$) is lower in the equilibrium in Figure \ref{fig:MPNE_example_both_behavioral_2}; that is, the PNE of Figure \ref{fig:MPNE_example_both_behavioral_2} would be preferred from a social planner's perspective.
\end{example}

\subsection{Measuring the Inefficiency of PNE: The Price of Behavioral Anarchy}\label{sec:PoBA}

\begin{figure}
\centering
\begin{tikzpicture}[scale=0.8]

\tikzset{edge/.style = {->,> = latex'}}

\node[draw,shape=circle] (vs) at (-2,0) {$v_s$};
\node[draw,shape=circle] (v1) at (0,0) {$v_1$};
\node[draw,shape=circle] (v2) at (2,0) {$v_2$};
\node[draw,shape=circle] (v3) at (4,0) {$v_3$};
\node[draw,shape=circle] (vk) at (6,0) {$v_{K}$};

\node[shape=circle] (L1) at (0,1) {$L_1$};
\node[shape=circle] (L2) at (2,1) {$L_2$};
\node[shape=circle] (L3) at (4,1) {$L_3$};
\node[shape=circle] (LK) at (6,1) {$L_{K}$};

\draw[edge,thick] (vs) to (v1);
\draw[edge,thick] (v1) to (v2);
\draw[edge,thick] (v2) to (v3);
\draw[edge,thick,dotted] (v3) to (vk);
\end{tikzpicture}
\caption{An attack graph where PoBA is lower bounded by $(1-\epsilon) \exp(B)$ which shows that the upper bound obtained in Proposition~\ref{prop:POBA_upper_bound} is tight.}
\label{fig:series_SOA}
\end{figure}
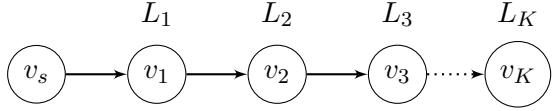

The notion of Price of Anarchy (PoA) is often used to quantify the inefficiency of Nash equilibrium 
 compared to the socially optimal outcome \cite{roughgarden2003price}. Specifically, the Price of Anarchy is defined as the ratio of the highest total system true expected cost at a PNE to the total system true expected cost at the social optimum. 
 For our setting, we seek to define a measure to capture the inefficiencies of the equilibrium due to both the defenders' individual strategic behavior and their behavioral decision-making. We thus define the Price of Behavioral Anarchy (PoBA) as the ratio of total system true expected cost of behavioral defenders at the worst PNE (i.e., the PNE with the largest total true expected cost over all PNE),
to the total system true expected cost at the social optimum (computed by a non-behavioral social planner).\footnote{One could also consider the impact of a behavioral social planner; since the goal of our paper is to quantify the (objective) inefficiencies due to behavioral decision making, we leave the study of behavioral social planner for future work.} 

Specifically, we define $\hat{C}(\mathbf{x}) \triangleq \sum_{D_k \in \mathcal{D}} \hat{C}_k(\mathbf{x})$, where $\hat{C}_k$ (defined in \eqref{eq:defender_utility}) is the true expected cost faced by defender $D_k$ under the investment vector $\mathbf{x}$. Let $X^{\mathtt{NE}}:=\{\mathbf{\bar{x}} \in \mathbb{R}^{|\DD||\EE|}_{\geq 0}| \bar{x}_k \in \displaystyle \argmin_{x \in X_k} C_k(x,\mathbf{\bar{x}}_{-k}),$ 
$\forall D_k \in \DD \}$, i.e., $X^{\mathtt{NE}}$ is the set of all investments that constitute a PNE. We now define the Price of Behavioral Anarchy as
\begin{equation}\label{eq: Security Under Anarchy}
    PoBA = \frac{\sup_{\mathbf{\bar{x}}\in X^\mathtt{NE}} \hat{C}(\mathbf{\bar {x}})}{\hat{C}(\mathbf{x^{*}})},
\end{equation}
where $\mathbf{x^{*}}$ denotes the investments at the social optimum (computed by a non-behavioral social planner with access to the sum of all defenders' budgets). Mathematically, let $X^{\mathtt{Soc}}:=\{\mathbf{x^{*}} \in \mathbb{R}^{|\DD||\EE|}_{\geq 0}| \mathbf{1}^T \mathbf{x^{*}} \leq \sum_{\forall D_k \in \DD} B_k\}$, i.e., $X^{\mathtt{Soc}}$ is the set of all investments by the social planner and $\mathbf{x^{*}} \in \argmin_{\mathbf{x^{*}} \in X^{\mathtt{Soc}}} \hat{C}(\mathbf{x^{*}})$. When $\bar{\mathbf{x}}$ is any PNE, but not necessarily the one with the worst social cost, we refer to the ratio of $\hat{C}(\mathbf{\bar {x}})$ and $\hat{C}(\mathbf{x^{*}})$ as the ``inefficiency'' of the equilibrium. We emphasize that the costs in both the numerator and the denominator are the sum of the \emph{true} (rather than perceived) expected costs of the defenders.

We will establish upper and lower bounds on the PoBA. We first show that the PoBA is bounded if the total budget is bounded (regardless of the defenders' behavioral levels).

\begin{figure*}[t] 
\centering
\begin{subfigure}[t]{.48\textwidth}
\centering
\begin{tikzpicture}[scale=0.75]

\tikzset{edge/.style = {->,> = latex'}}

\node[draw,shape=circle] (N1) at (-2,0) {$v_s$};
\node[draw,shape=circle] (N2) at (0,1.5) {$v_1$};
\node[draw,shape=circle] (N3) at (0,-1.5) {$v_2$};
\node[draw,shape=circle] (N4) at (2,0) {$v_3$};
\node[draw,shape=circle] (N5) at (5,0) {$v_4$};

\node[shape=circle] (L1) at (2.5,-1) {$L_1=200$};
\node[shape=circle] (L2) at (5.25,1) {$L_2=200$};

\draw[edge,thick] (N1) to[edge node={node[above,xshift=-0.25cm]{$1.25$}},edge node={node[below,xshift=0.1cm]{$0$}}] (N2);
\draw[edge,thick] (N1) to[edge node={node[above,xshift=0.15cm]{$1.25$}},edge node={node[below,xshift=0cm]{$0$}}] (N3);
\draw[edge,thick] (N2) to[edge node={node[above,xshift=0.25cm]{$1.25$}},edge node={node[below,xshift=-0.1cm]{$0$}}] (N4);
\draw[edge,thick] (N3) to[edge node={node[above,xshift=-0.25cm]{$1.25$}},edge node={node[below,xshift=0.cm]{$0$}}] (N4);
\draw[edge,thick] (N4) to[edge node={node[above,xshift=-0cm]{$0$}},edge node={node[below,xshift=0cm]{$20$}}] (N5);

\end{tikzpicture}
\caption{$\alpha_1 = 1$, $\alpha_2 = 1$}
\label{fig:Non_behavioral}
\end{subfigure}
\begin{subfigure}[t]{.48\textwidth}
\centering
\begin{tikzpicture}[scale=0.75]

\tikzset{edge/.style = {->,> = latex'}}

\node[draw,shape=circle] (N1) at (-2,0) {$v_s$};
\node[draw,shape=circle] (N2) at (0,1.5) {$v_1$};
\node[draw,shape=circle] (N3) at (0,-1.5) {$v_2$};
\node[draw,shape=circle] (N4) at (2,0) {$v_3$};
\node[draw,shape=circle] (N5) at (5,0) {$v_4$};

\node[shape=circle] (L1) at (2.5,-1) {$L_1=200$};
\node[shape=circle] (L2) at (5.25,1) {$L_2=200$};

\draw[edge,thick] (N1) to[edge node={node[above,xshift=-0.25cm]{$1.25$}},edge node={node[below,xshift=0.12cm]{$1.34$}}] (N2);
\draw[edge,thick] (N1) to[edge node={node[above,xshift=0.15cm]{$1.25$}},edge node={node[below,xshift=-0.1cm,yshift=-0.1cm]{$1.34$}}] (N3);
\draw[edge,thick] (N2) to[edge node={node[above,xshift=0.25cm]{$1.25$}},edge node={node[below,xshift=-0.15cm]{$1.34$}}] (N4);
\draw[edge,thick] (N3) to[edge node={node[above,xshift=-0.25cm]{$1.25$}},edge node={node[below,yshift=-0.1cm]{$1.34$}}] (N4);
\draw[edge,thick] (N4) to[edge node={node[above,xshift=-0cm]{$0$}},edge node={node[below,xshift=0cm]{$14.64$}}] (N5);
\end{tikzpicture}
 \caption{$\alpha_{1} = 1 , \alpha_{2} = 0.6$}
 \label{fig: behavioral_beneficial}
 \end{subfigure} 
\caption{The numbers above (below) each edge represent investments by defender $D_1$ ($D_2$). In (a), the non-behavioral defender $D_1$ does not receive any investment contributions from the non-behavioral defender $D_2$. In (b), the non-behavioral defender $D_1$ benefits from the investment contributions of the behavioral defender $D_2$.}
\label{fig:Beneficial_main}
\end{figure*}
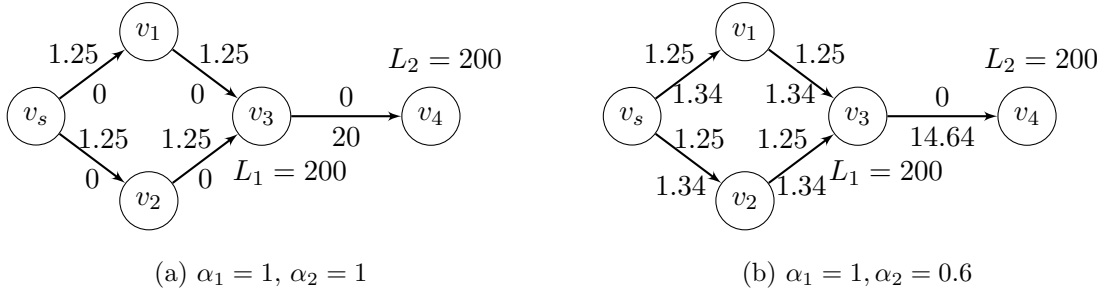

\begin{proposition}\label{prop:POBA_upper_bound}
Let the sum of the budgets available to all defenders be $B$, and let the probability of successful attack on each edge $(v_i,v_j)\in \mathcal{E}$ be  given by $p_{i,j}(x_{i,j})=e^{-x_{i,j}}$. Then, for any attack graph and any profile of behavioral levels $\{\alpha_k\}$,  $PoBA\leq \exp(B)$. 
\end{proposition}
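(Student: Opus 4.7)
The plan is to obtain the bound by separately upper bounding the numerator of \eqref{eq: Security Under Anarchy} and lower bounding its denominator, without appealing to the equilibrium structure or to the behavioral parameters $\{\alpha_k\}$ at all. Let $L_{\text{tot}} \triangleq \sum_{D_k \in \mathcal{D}} \sum_{v_m \in V_k} L_m$ denote the total loss across all defenders' assets.

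First I would bound the numerator. Because each edge probability $e^{-x_{i,j}}$ lies in $(0,1]$ for every non-negative investment, the product along any path is at most one, so $\max_{P \in \mathcal{P}_m} \prod_{(v_i,v_j) \in P} e^{-x_{i,j}} \leq 1$ for each target $v_m$. Summing the definition \eqref{eq:defender_utility} across all defenders and assets at any feasible profile $\mathbf{x}$ gives $\hat{C}(\mathbf{x}) \leq L_{\text{tot}}$. Since every PNE $\bar{\mathbf{x}} \in X^{\mathtt{NE}}$ is feasible, this bound transfers to $\sup_{\bar{\mathbf{x}} \in X^{\mathtt{NE}}} \hat{C}(\bar{\mathbf{x}})$.

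Next I would bound the denominator from below using only the social planner's budget constraint $\mathbf{1}^T \mathbf{x}^{*} \leq B$. Since any path $P \in \mathcal{P}_m$ is a subset of $\mathcal{E}$, the sum of investments along it satisfies $\sum_{(v_i,v_j) \in P} x_{i,j}^{*} \leq \mathbf{1}^T \mathbf{x}^{*} \leq B$. Using \eqref{eq:path_probability}, this gives, for every $P \in \mathcal{P}_m$,
\begin{equation*}
\prod_{(v_i,v_j) \in P} e^{-x_{i,j}^{*}} = \exp\Big(-\sum_{(v_i,v_j) \in P} x_{i,j}^{*}\Big) \geq e^{-B}.
\end{equation*}
Hence $\max_{P \in \mathcal{P}_m} \prod_{(v_i,v_j) \in P} e^{-x_{i,j}^{*}} \geq e^{-B}$ for each $v_m$, and summing over defenders and assets yields $\hat{C}(\mathbf{x}^{*}) \geq L_{\text{tot}}\, e^{-B}$. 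Dividing the two bounds gives $PoBA \leq L_{\text{tot}}/(L_{\text{tot}}\, e^{-B}) = \exp(B)$, as claimed.

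The argument is essentially mechanical: the key conceptual point is that no clever equilibrium analysis is needed because, no matter how the defenders split their money across edges (behaviorally or not), the probability of a successful attack along any path cannot drop below $e^{-B}$, while trivially any PNE's true expected cost is at most the total loss $L_{\text{tot}}$. Consequently, the behavioral levels $\{\alpha_k\}$ do not enter the bound at all. I do not anticipate any real obstacle; the only point requiring care is confirming that the supremum over $X^{\mathtt{NE}}$ is meaningful, which is guaranteed by the existence result in Proposition~\ref{prop:existence_pne}.
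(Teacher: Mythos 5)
Your proof is correct and follows essentially the same route as the paper: bound the numerator by the total loss $\sum_{D_k}\sum_{v_m\in V_k} L_m$ (since every path probability is at most one) and bound the denominator below by that same total loss times $e^{-B}$ (since no path can receive more than the total budget $B$). Your write-up merely makes explicit the step $\sum_{(v_i,v_j)\in P} x^{*}_{i,j}\leq \mathbf{1}^T\mathbf{x}^{*}\leq B$ that the paper states more tersely.
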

\begin{proof}
We start with the numerator of the PoBA in \eqref{eq: Security Under Anarchy} (the total true expected cost at the worst PNE). 
Recall that each defender $D_k$ incurs a loss $L_m$ for each compromised asset $v_m$. Thus, the worst case true expected cost under any PNE (including the worst PNE) is upper bounded by $\displaystyle \sum_{D_{k} \in \mathcal{D}} \sum_{v_{m} \in V_{k}} L_{m} $ (i.e., the sum of losses of all assets). On the other hand, the denominator (the socially optimal true expected cost) is lower bounded by $\left(\displaystyle \sum_{D_{k} \in \mathcal{D}} \sum_{v_{m} \in V_{k}} L_{m}\right) \exp(-B)$ (which can only be achieved if every asset has all of the budget $B$, invested by a social planner, on  its attack path). Substituting these bounds into \eqref{eq: Security Under Anarchy}, we obtain $PoBA \le \exp(B)$.
\end{proof}


Next, we show that the upper bound on PoBA obtained in Proposition \ref{prop:POBA_upper_bound} is asymptotically tight.

\begin{proposition}
\label{prop:POBA_lower_bound}
For all $B > 0$ and $\epsilon > 0$, there exists an instance of the Behavioral Security Game with total budget $B$ such that the PoBA is lower bounded by $(1-\epsilon)\exp(B)$.
\end{proposition}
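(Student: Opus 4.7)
The plan is to saturate the upper bound in Proposition~\ref{prop:POBA_upper_bound} by constructing a family of instances in which the socially optimal investment simultaneously protects every asset, while the worst PNE investment protects only one. The construction uses the series chain of Figure~\ref{fig:series_SOA} together with a concentration of budget in a single defender whose best response is non-unique, so that inefficiency is maximized when she directs her entire investment to the terminal edge of the chain. Note that for $\alpha_k < 1$ Proposition~\ref{prop: uniqueness of investments} forces unique best responses and rules out this kind of degenerate PNE, so it is essential to allow $\alpha_k = 1$; this is still a valid special case of the Behavioral Security Game framework.

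Concretely, I would fix $B>0$ and take the attack graph $v_s \to v_1 \to \cdots \to v_K$ with attack probabilities $p_{i,j}(x_{i,j})=e^{-x_{i,j}}$, assign an identical loss $L_k = L$ to every node $v_k$, introduce $K$ defenders $D_1,\ldots,D_K$ with $D_k$ responsible for $v_k$, and choose the budget profile $B_K=B$, $B_k=0$ for $k<K$, with $\alpha_k=1$ for all $k$. The integer $K$ will be chosen at the end depending on $\epsilon$.

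The core analysis proceeds in three steps. First, defenders $D_1,\ldots,D_{K-1}$ are forced to play the zero vector since their budgets vanish, so $D_K$'s cost reduces to $L e^{-\sum_i x_i^K}$; this quantity depends only on $\sum_i x_i^K$, so the entire simplex $\{x^K \ge 0 : \sum_i x_i^K = B\}$ is a set of best responses. In particular, the profile in which $D_K$ places all of $B$ on the terminal edge $(v_{K-1},v_K)$ is a valid PNE. Second, under this PNE every upstream asset $v_k$ with $k<K$ has attack probability $1$ (no investment lies on its path) and only $v_K$ is reduced to $e^{-B}$, so the true total cost is $(K-1)L + Le^{-B}$. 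Third, the non-behavioral social planner with combined budget $B$ places everything on the first edge $(v_s,v_1)$, reducing every asset's attack probability to $e^{-B}$ for a total cost of $KLe^{-B}$.

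Combining these gives
\begin{equation*}
PoBA \;\ge\; \frac{(K-1)L + Le^{-B}}{KLe^{-B}} \;=\; \frac{K-1}{K}\,e^{B} + \frac{1}{K},
\end{equation*}
which tends to $e^{B}$ as $K\to\infty$. A short calculation shows that the choice $K \ge (1-e^{-B})/\epsilon$ suffices to obtain $PoBA \ge (1-\epsilon)e^{B}$, completing the proof. The main obstacle is conceptual rather than technical: one must recognize that the tight example arises from coordination failure — a single defender with a selfish objective and a non-unique best response — rather than from behavioral probability weighting itself, since Proposition~\ref{prop: uniqueness of investments} generally rules out the kind of ``pathological'' equilibria needed to approach the upper bound.
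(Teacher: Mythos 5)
Your construction is correct and establishes the proposition as stated, but it is a genuinely different instance from the paper's, even though both use the same series chain of Figure~\ref{fig:series_SOA}. The paper splits the budget equally among the $K$ defenders, concentrates the loss on the \emph{first} node ($L_1=K$, the rest $\tfrac{1}{K-1}$), and shows that the profile where each $D_k$ puts $\tfrac{B}{K}$ on its own incoming edge is a PNE \emph{for every} $\alpha_k\in(0,1]$ --- indeed, for $\alpha_k<1$ this is each defender's unique best response given the upstream investments, by the same calculation as in \eqref{eq:minmax_formula_POBA}. The inefficiency there comes from the heavily-weighted node $v_1$ receiving only $\tfrac{B}{K}$ of protection. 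Your version instead gives all of $B$ to $D_K$, uses uniform losses, and exploits the indifference of a non-behavioral defender over the simplex $\{\mathbf{1}^Tx^K=B\}$ to select the worst PNE (all investment on the terminal edge), leaving the $K-1$ upstream assets completely unprotected; the arithmetic $\tfrac{K-1}{K}e^B+\tfrac{1}{K}$ and the choice $K\ge(1-e^{-B})/\epsilon$ are correct, and $\alpha_k=1$, $B_k=0$ are admissible parameter values, so the proof goes through. What your route gives up is robustness in $\alpha$: your final remark --- that Proposition~\ref{prop: uniqueness of investments} ``generally rules out the kind of pathological equilibria needed to approach the upper bound,'' so that $\alpha_k=1$ is essential --- is an overreach. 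The paper's instance shows the bound $(1-\epsilon)\exp(B)$ is approached with strictly behavioral defenders and \emph{unique} best responses; the tightness is driven by the fragmentation of the budget across selfish defenders on a chain, not by equilibrium multiplicity. So while your example is arguably more elementary, the paper's demonstrates the stronger fact (noted in its subsequent remark) that the bound is sharp regardless of the profile $\{\alpha_k\}$.
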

\begin{proof}
Consider the attack graph in Figure~\ref{fig:series_SOA}, where the probability of successful attack on each edge $(v_i, v_j)$ is given by \eqref{eq:expon_prob_func} with $p_{i,j}^0 = 1$. This graph contains $K$ defenders, and each defender $D_{k}$ is responsible for defending target node $v_{k}$. Assume the total security budget $B$ is divided equally between the $K$ defenders (i.e., each defender has  security budget  $\frac{B}{K}$). Let the first node have loss equal to $L_1 = K$, and the other $K-1$ nodes have loss $\frac{1}{K-1}$. Then, the socially optimal solution would put all the budget $B$ on the first link $(v_s,v_1)$, so that all nodes have probability of successful attack given by $\exp(-B)$. Thus, the denominator of \eqref{eq: Security Under Anarchy} is $\sum_{i = 1}^{K}L_i\exp(-B) = (K+1)\exp(-B)$. 

We now characterize a lower bound on the cost under a PNE (i.e., the numerator of \eqref{eq: Security Under Anarchy}).  Specifically, consider the investment profile where each defender $D_k$ puts their entire budget $\frac{B}{K}$ on the edge coming into their node $v_k$.  We claim that this is a PNE.  To show this, first consider defender $D_1$.  Since investments on edges other than $(v_s, v_1)$ do not affect the probability of successful attack at node $v_1$, it is optimal for defender $D_1$ to put all her investment on $(v_s, v_1)$.

Now consider defender $D_2$.  Given $D_1$'s investment on $(v_s, v_1)$, defender $D_2$ has to decide how to optimally spread her budget of $\frac{B}{K}$ over the two edges $(v_s, v_1)$ and $(v_1, v_2)$ in order to minimize her cost function \eqref{eq:defender_utility_edge}.  Thus, $D_2$'s optimization problem, given $D_1$'s investment, is 
\begin{equation}\label{eq:minmax_formula_POBA}
\begin{aligned}
& \underset{x_{s,1}^2 + x_{1,2}^2 = \frac{B}{K}}{\text{minimize}}
& & 
e^{-(\frac{B}{K}+ x_{s,1}^2)^{\alpha_2} - \left(x^{2}_{1,2}\right)^{\alpha_2}}.\\
\end{aligned}
\end{equation}
The unique optimal solution of \eqref{eq:minmax_formula_POBA} (for all $\alpha_2 \in (0,1)$) would be to put all $\frac{B}{K}$ into $x_{1,2}^2$ and zero on $x_{s,1}^2$.  This is also optimal (but not unique) when $\alpha_2 = 1$.

Continuing this analysis, we see that if defenders $D_1, D_2, \ldots, D_{k-1}$ have each invested $\frac{B}{K}$ on the edges incoming into their nodes, it is optimal for defender $D_k$ to also invest their entire budget $\frac{B}{K}$ on the incoming edge to $v_k$.  Thus, investing $\frac{B}{K}$ on each edge is a PNE.  

The numerator of the PoBA under this PNE is lower bounded by $L_1 \exp(-\frac{B}{K}) = K \exp(-\frac{B}{K})$.  Thus, the PoBA is lower bounded by
\begin{equation*}
PoBA \geq \frac{K \exp(-\frac{B}{K})}{(K+1)\exp(-B)} =  \frac{K \exp(-\frac{B}{K})}{(K+1)} \exp(B).
\end{equation*}
As the length of the chain grows, we have $\displaystyle \lim_{K \to \infty} \frac{K \exp(-\frac{B}{K})}{(K+1)} = 1$. Thus, for every $\epsilon > 0$, there exists $K$ large enough such that the PoBA in the line graph with $K$ nodes is lower bounded by $(1-\epsilon) \exp(B)$.
\end{proof}

\begin{remark}
\normalfont
The upper bound obtained in Proposition \ref{prop:POBA_upper_bound} is agnostic to the structure of the network, the number of defenders, and their degree of misperception of probabilities. In Proposition \ref{prop:POBA_lower_bound}, our result shows that the upper bound obtained in Proposition \ref{prop:POBA_upper_bound} is sharp (i.e., it cannot be reduced without additional assumptions on the game). For any particular instance of the problem, however, we can compute the inefficiency directly, which will depend on the network structure and other parameters of that instance. $\hfill \blacksquare$
\end{remark}
Before considering the case study, we will conclude this section with an example of an interesting phenomenon, where the (objectively) suboptimal investment decisions made by a behavioral defender with respect to their own assets can actually benefit the other defenders in the network.

\begin{example}
We consider the attack graph of Figures \ref{fig:Non_behavioral} and \ref{fig: behavioral_beneficial} with two defenders, $D_1$ and $D_2$.  Defender $D_1$ wishes to protect node $v_3$, and defender $D_2$ wishes to protect node $v_4$.  Note that $D_1$'s asset  ($v_3$) is directly on the attack path to $D_2$'s asset ($v_4$).
Suppose that defender $D_1$ has a budget $B_{1} = 5$, while defender $D_2$ has a budget $ B_{2} = 20$.  The optimal investments in the following scenarios were calculated using CVX \cite{cvx}.

Suppose both defenders are non-behavioral. In this case, Proposition~\ref{prop: non-behavioral min cut} suggests that it is optimal for $D_2$ to put her entire budget on the min-cut, given by the edge $(v_3, v_4)$.  The corresponding PNE is shown in Figure \ref{fig:Non_behavioral}.  On the other hand, as indicated by Proposition~\ref{prop:behavioral-suboptimal}, investing solely on the min-cut is no longer optimal for a behavioral defender. Indeed, Figure \ref{fig: behavioral_beneficial} shows a PNE for the case where $D_2$ is behavioral with $\alpha_2 = 0.6$, and has spread some of her investment to the other edges in the attack graph.  Therefore,  $D_1$'s subnetwork will benefit due to the behavioral decision-making by $D_2$. 

It is also worth considering the total system true expected cost of the game at equilibrium, given by $ \hat{C}(\mathbf{\bar {x}}) = \hat{C}_{1}(\mathbf{\bar {x}}) + \hat{C}_{2}(\mathbf{\bar {x}})$ where $\mathbf{\bar {x}}$ is the investment at the PNE. 
For this example, when  both defenders are non-behavioral (i.e., $\alpha_{1} = \alpha_{2} = 1$), $\hat{C}(\mathbf{\bar {x}})  =  16.42$, while $ \hat{C}(\mathbf{\bar {x}}) = 1.13$ if defender $D_2$ is behavioral (with $\alpha_{1} = 1, \alpha_{2} = 0.6$). This considerable drop in the total true expected cost shows that the behavioral defender's contributions to the non-behavioral defender's  subnetwork may also be beneficial to the overall welfare of the network, especially under budget asymmetries or if defender $D_1$'s asset is more valuable.
\end{example}

\section{Case Study}
\label{sec:example}

\begin{figure}
\centering
  \includegraphics[width=0.9\linewidth]{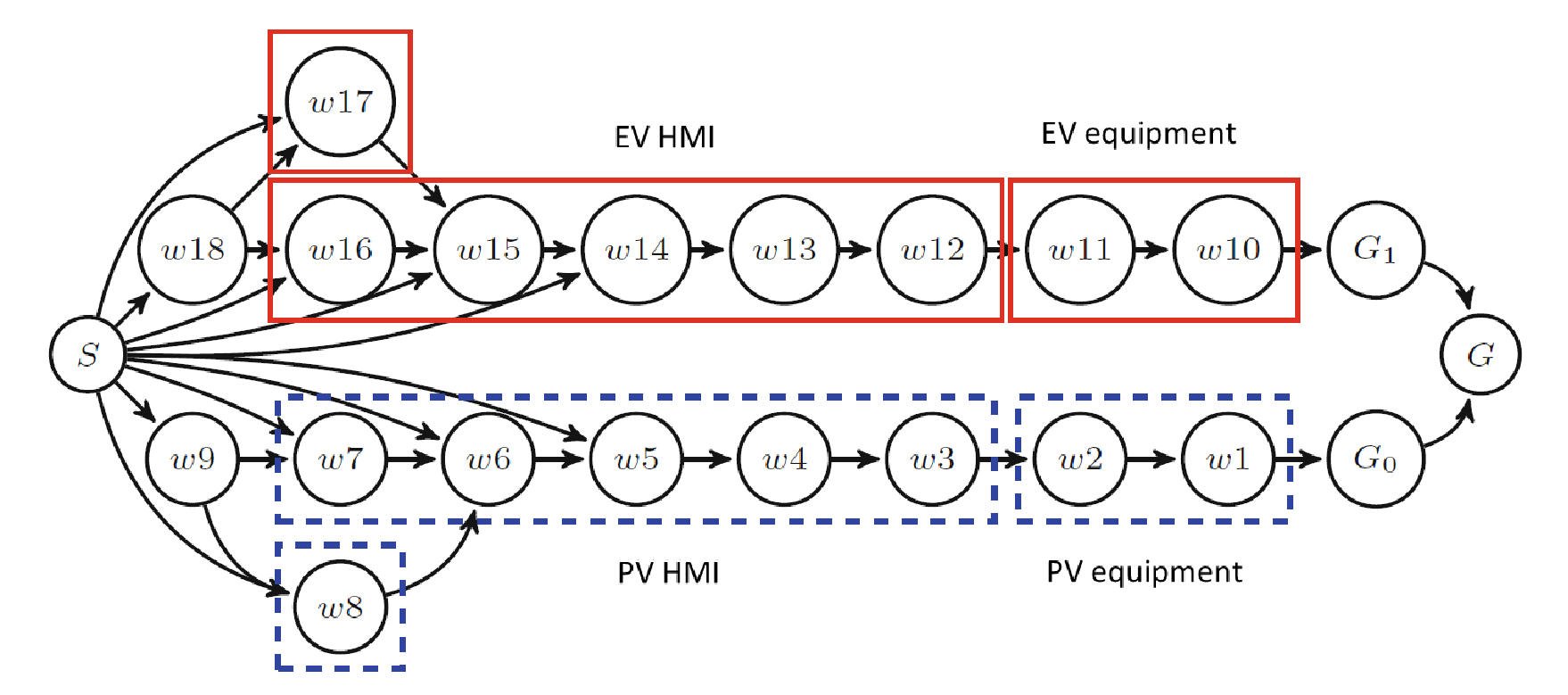}
  \caption{Attack graph of a DER.1 failure scenario adapted from
  \cite{jauhar2015model}. It shows stepping-stone attack steps that can lead to the compromise of a photovoltaic generator (PV) (i.e., $G_{0}$) or an electric vehicle charging station (EV) (i.e., $G_{1}$).} 
  \label{fig:nescor-attack-graph}
  \vspace{-0.16in}
\end{figure}

Here, we examine the outcomes of behavioral decision-making in a case study involving a distributed energy resource failure scenario, DER.1, identified by the US National Electric Sector Cybersecurity Organization Resource (NESCOR) \cite{jauhar2015model}. Figure~\ref{fig:nescor-attack-graph} is replicated from the attack graph for the DER.1 (Figure 4 in \cite{jauhar2015model}). Suppose the probability of successful attack on each edge is $p_{i,j}(x_{i,j})= e^{-x_{i,j}}$. There are two defenders, $D_1$ and $D_2$.  Defender $D_1$'s critical assets are $G_0$ and $G$, with losses of $L_0=200$ and $L=100$, respectively.  Defender $D_2$'s critical assets are $G_1$ and $G$, also with losses of $L_1=200$ and $L=100$, respectively. Note that $G$ is a shared asset among the two defenders. 

We assume that each defender has a security budget of $\frac{B}{2}$ (i.e., the budget distribution is symmetric between the two defenders). For a fair comparison, the social planner has total budget $B$. In our experiments, we use best response dynamics to find a Nash equilibrium $\bar{\mathbf{x}}$.  We then compute the socially optimal investment $\mathbf{x}^*$, and calculate the ratio given by \eqref{eq: Security Under Anarchy} to measure the inefficiency of the corresponding equilibrium.  

Figure~\ref{fig:PoBA} shows the value of this ratio as we sweep $\alpha$ (taken to be the same for both defenders) from $0$ (most behavioral) to $1$ (non-behavioral), for different values of the total budget $B$.  As the figure shows, the inefficiency of the equilibrium decreases to $1$
as $\alpha$ increases, reflecting the fact that the investment decisions become better as the defenders become less behavioral; see Section \ref{sec:single_optimal_inves_dec}. 
Furthermore,  Figure~\ref{fig:PoBA} shows that the inefficiency due to behavioral decision-making becomes exacerbated as the total budget $B$ increases.  This happens as behavioral defenders shift higher amounts of their budget to the parallel edges in the networks (i.e., not in the min-cut edge set), as suggested by Proposition~\ref{prop:behavioral-suboptimal}. On the other hand, the  social planner can significantly lower the total cost when the budget increases, as she puts all the budget only on the min-cut edges, as suggested by Proposition~\ref{prop: non-behavioral min cut}; this reduces the total cost faster towards zero as the budget increases.

Other practical scenarios (such as deploying moving-target defense) where our results are applicable can be found in the book chapter \cite{hota2018game}. While our results show that the inefficiency becomes exacerbated as the total budget increases, this property does not hold for all networks. We omit further discussions about these aspects due to space constraints.


\begin{figure}
\begin{center}
  \includegraphics[width=0.6\columnwidth]{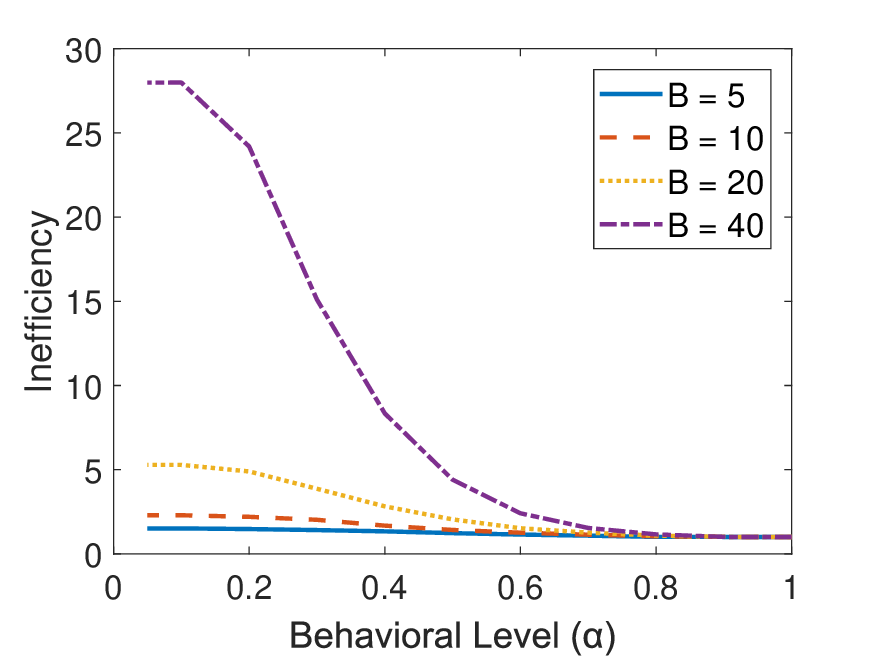}
  \caption{The inefficiency for different behavioral levels of the defenders. We observe that the inefficiency increases as the security budget increases, and as the defenders become more behavioral.\protect\footnotemark}
  \label{fig:PoBA}
\end{center}
\end{figure}

\footnotetext{Recall that the inefficiency is the ratio of the total system true expected cost at a PNE to the total system true expected cost at the (non-behavioral) social optimum.}

\section{Summary of Findings}\label{sec:conclusion}
In this paper, we presented an analysis of the impacts of behavioral decision-making on the security of interdependent systems.  First, we showed that the optimal investments by a behavioral decision-maker will be unique, whereas non-behavioral decision-makers may have multiple optimal solutions. Second, non-behavioral decision-makers find it optimal to concentrate their security investments on minimum edge-cuts in the network in order to protect their assets, whereas behavioral decision-makers will choose to spread their investments over other edges in the network, potentially making their assets more vulnerable.  Third, we showed that multi-defender games possess a PNE (under appropriate conditions on the game), and introduced a metric that we termed the ``Price of Behavioral Anarchy'' to quantify the inefficiency of the PNE as compared to the security outcomes under socially optimal investments.  We provided a tight bound on PoBA, which depended only on the total budget across all defenders.  However, we also showed that the tendency of behavioral defenders to spread their investments over the edges of the network can potentially benefit the other defenders in the network.  Finally, we presented a case study where the inefficiency of the equilibrium increased as the defenders became more behavioral.

In total, our analysis shows that human decision-making (as captured by behavioral probability weighting) can have substantial impacts on the security of interdependent systems, and must be accounted for when designing and operating distributed, interdependent systems. In other words, the insights that are provided by our work (e.g., that behavioral decision-makers may move some of their security investments away from critical portions of the network) can be used by system planners to identify portions of their network that may be left vulnerable by the human security personnel who are responsible for managing those parts of the network. A future avenue for research is to perform human experiments to test our predictions. Moreover, studying the properties of security investments when different edges have different degrees of misperception of attack probabilities is another avenue for future research.



\bibliographystyle{unsrt}
\bibliography{References}

\end{document}